\icmltitlerunning{Inferring Graphs from Cascades: A Sparse Recovery Framework}
\newcommand{\reals}{\mathbb{R}}
\renewcommand{\O}{\mathcal{O}}
\DeclareMathOperator{\E}{\mathbb{E}}
\let\P\relax
\DeclareMathOperator{\P}{\mathbb{P}}
\newcommand{\inprod}[2]{#1 \cdot #2}
\newcommand{\defeq}{\equiv}
\DeclareMathOperator*{\argmax}{argmax}
\DeclareMathOperator*{\argmin}{argmin}
\newtheorem{theorem}{Theorem}
\newtheorem{lemma}{Lemma}
\newtheorem{corollary}{Corollary}
\newtheorem{proposition}{Proposition}
\newtheorem{definition}{Definition}
\begin{document}

\twocolumn[
\icmltitle{Inferring Graphs from Cascades: A Sparse Recovery Framework}

\icmlauthor{Jean Pouget-Abadie}{jeanpougetabadie@g.harvard.edu}
\icmladdress{Harvard University}
\icmlauthor{Thibaut Horel}{thorel@seas.harvard.edu}
\icmladdress{Harvard University}

\icmlkeywords{Sparse Recovery, Cascade Models, Graph Inference, Networks,
Diffusion Processes}

\vskip 0.3in
]

\begin{abstract}
In the Network Inference problem, one seeks to recover the edges of an unknown
graph from the observations of cascades propagating over this graph.  In this
paper, we approach this problem from the sparse recovery perspective.  We
introduce a general model of cascades, including the voter model and the
independent cascade model, for which we provide the first algorithm which
recovers the graph's edges with high probability and ${\cal O}(s\log m)$
measurements where $s$ is the maximum degree of the graph and $m$ is the number
of nodes.  Furthermore, we show that our algorithm also recovers the edge
weights (the parameters of the diffusion process) and is robust in the context
of approximate sparsity. Finally we prove an almost matching lower bound of
$\Omega(s\log\frac{m}{s})$ and validate our approach empirically on synthetic
graphs.

\end{abstract}

\section{Introduction}

Graphs have been extensively studied for their propagative abilities:
connectivity, routing, gossip algorithms, etc.
A diffusion process taking place over a graph provides valuable information
about the presence and weights of its edges. \emph{Influence cascades} are a
specific type of diffusion processes in which a particular infectious behavior
spreads over the nodes of the graph.  By only observing the ``infection times''
of the nodes in the graph, one might hope to recover the underlying graph and
the parameters of the cascade model. This problem is known in the literature as
the \emph{Network Inference problem}.

More precisely, solving the Network Inference problem involves designing an
algorithm taking as input a set of observed cascades (realisations of the
diffusion process) and recovers with high probability a large fraction of the
graph's edges. The goal is then to understand the relationship between the
number of observations, the probability of success, and the accuracy of the
reconstruction.

The Network Inference problem can be decomposed and analyzed ``node-by-node''.
Thus, we will focus on a single node of degree $s$ and discuss how to identify
its parents among the $m$ nodes of the graph. Prior work has shown that the
required number of observed cascades is $\O(poly(s)\log m)$
\cite{Netrapalli:2012, Abrahao:13}.

A more recent line of research~\cite{Daneshmand:2014} has focused on applying
advances in sparse recovery to the network inference problem. Indeed, the graph
can be interpreted as a ``sparse signal'' measured through influence cascades
and then recovered.  The challenge is that influence cascade models typically
lead to non-linear inverse problems and the measurements (the state of the
nodes at different time steps) are usually correlated. The sparse recovery
literature suggests that $\Omega(s\log\frac{m}{s})$ cascade observations should
be sufficient to recover the graph~\cite{donoho2006compressed, candes2006near}.
However, the best known upper bound to this day is $\O(s^2\log
m)$~\cite{Netrapalli:2012, Daneshmand:2014}

The contributions of this paper are the following:
\vspace{-1em}
\begin{itemize}
    \item we formulate the Graph Inference problem in the context of
        discrete-time influence cascades as a sparse recovery problem for
        a specific type of Generalized Linear Model. This formulation notably
        encompasses the well-studied Independent Cascade Model and Voter Model.
        \vspace{-0.5em}
    \item we give an algorithm which recovers the graph's edges using $\O(s\log
        m)$ cascades. Furthermore, we show that our algorithm is also able to
        efficiently recover the edge weights (the parameters of the influence
        model) up to an additive error term,
        \vspace{-0.5em}
    \item we show that our algorithm is robust in cases where the signal to
        recover is approximately $s$-sparse by proving guarantees in the
        \emph{stable recovery} setting.
        \vspace{-0.5em}
    \item we provide an almost tight lower bound of $\Omega(s\log \frac{m}{s})$
        observations required for sparse recovery.
\end{itemize}
\vspace{-0.5em}

The organization of the paper is as follows: we conclude the introduction by a
survey of the related work. In Section~\ref{sec:model} we present our model of
Generalized Linear Cascades and the associated sparse recovery formulation.  Its
theoretical guarantees are presented for various recovery settings in
Section~\ref{sec:results}. The lower bound is presented in
Section~\ref{sec:lowerbound}. Finally, we conclude with experiments in
Section~\ref{sec:experiments}.

\paragraph{Related Work}

The study of edge prediction in graphs has been an active field of research for
over a decade~\cite{Nowell08, Leskovec07, AdarA05}.~\cite{GomezRodriguez:2010}
introduced the {\scshape Netinf} algorithm, which approximates the likelihood of
cascades represented as a continuous process.  The algorithm was improved in
later work~\cite{gomezbalduzzi:2011}, but is not known to have any theoretical
guarantees beside empirical validation on synthetic networks.
\citet{Netrapalli:2012} studied the discrete-time version of the independent
cascade model and obtained the first ${\cal O}(s^2 \log m)$ recovery guarantee
on general networks. The algorithm is based on a likelihood function similar to
the one we propose, without the $\ell_1$-norm penalty. Their analysis depends on
a {\it correlation decay\/} assumption, which limits the number of new infections
at every step. In this setting, they show a lower bound of the number of
cascades needed for support recovery with constant probability of the order
$\Omega(s \log (m/s))$. They also suggest a {\scshape Greedy} algorithm, which
achieves a ${\cal O}(s \log m)$ guarantee in the case of tree graphs. The work
of~\cite{Abrahao:13} studies the same continuous-model framework as
\cite{GomezRodriguez:2010} and obtains an ${\cal O}(s^9 \log^2 s \log m)$
support recovery algorithm, without the \emph{correlation decay} assumption.
\cite{du2013uncover} propose a similar algorithm to ours for recovering the
weights of the graph under a continuous-time independent cascade model, without
proving theoretical guarantees.

Closest to this work is a recent paper by \citet{Daneshmand:2014}, wherein the
authors consider a $\ell_1$-regularized objective function. They adapt standard
results from sparse recovery to obtain a recovery bound of ${\cal O}(s^3 \log
m)$ under an irrepresentability condition~\cite{Zhao:2006}. Under stronger
assumptions, they match the~\cite{Netrapalli:2012} bound of ${\cal O}(s^2 \log
m)$, by exploiting similar properties of the convex program's KKT conditions.
In contrast, our work studies discrete-time diffusion processes including the
Independent Cascade model under weaker assumptions. Furthermore, we analyze both
the recovery of the graph's edges and the estimation of the model's parameters,
and achieve close to optimal bounds.

The work of~\cite{du2014influence} is slightly orthogonal to ours since they
suggest learning the \emph{influence} function, rather than the
parameters of the network directly.

\section{Model}
\label{sec:model}
We consider a graph ${\cal G}= (V, E, \Theta)$, where $\Theta$ is a $|V|\times
|V|$ matrix of parameters describing the edge weights of $\mathcal{G}$.
Intuitively, $\Theta_{i,j}$ captures the ``influence'' of node $i$ on node $j$.
Let $m\defeq |V|$. For each node $j$, let $\theta_{j}$ be the $j^{th}$ column
vector of $\Theta$.  A discrete-time \emph{Cascade model} is a
Markov process over a finite state space ${\{0, 1, \dots, K-1\}}^V$ with the
following properties:
\begin{enumerate}
    \item Conditioned on the previous time step, the transition events between
        two states in $\{0,1,\dots, K-1\}$ for each $i \in V$ are mutually
        independent across $i\in V$.
\item Of the $K$ possible states, there exists a \emph{contagious state} such
    that all transition probabilities of the Markov process can be expressed as
    a function of the graph parameters $\Theta$ and the set of ``contagious
    nodes'' at the previous time step.
  \item The initial probability over ${\{0, 1, \dots, K-1\}}^V$ is such that all
  nodes can eventually reach a \emph{contagious state} with non-zero
  probability. The ``contagious'' nodes at $t=0$ are called
    \emph{source nodes}.
\end{enumerate}

In other words, a cascade model describes a diffusion process where a set of
contagious nodes ``influence'' other nodes in the graph to become contagious.
An \emph{influence cascade} is a realisation of this random process,
\emph{i.e.} the successive states of the nodes in graph ${\cal G}$. Note that
both the ``single source'' assumption made in~\cite{Daneshmand:2014} and
\cite{Abrahao:13} as well as the ``uniformly chosen source set'' assumption
made in~\cite{Netrapalli:2012} verify condition 3. Also note that the
multiple-source node assumption does not reduce to the single-source
assumption, even under the assumption that cascades do not overlap. Imagining
for example two cascades starting from two different nodes; since we do not
observe which node propagated the contagion to which node, we cannot attribute
an infected node to either cascade and treat the problem as two independent
cascades.

In the context of Network Inference,~\cite{Netrapalli:2012} focus
on the well-known discrete-time independent cascade model recalled below, which
\cite{Abrahao:13} and~\cite{Daneshmand:2014} generalize to continuous time. We
extend the independent cascade model in a different direction by considering
a more general class of transition probabilities while staying in the
discrete-time setting. We observe that despite their obvious differences, both
the independent cascade and the voter models make the network inference problem
similar to the standard generalized linear model inference problem. In fact, we
define a class of diffusion processes for which this is true: the
\emph{Generalized Linear Cascade Models}. The linear threshold model is
a special case and is discussed in Section~\ref{sec:linear_threshold}.

\subsection{Generalized Linear Cascade Models}
\label{sec:GLC}

Let \emph{susceptible} denote any state which can become contagious at
the next time step with a non-zero probability. We draw inspiration from
generalized linear models to introduce Generalized Linear Cascades:

\begin{definition}
\label{def:glcm}
Let $X^t$ be the indicator variable of ``contagious nodes'' at time step $t$.
A \emph{generalized linear cascade model} is a cascade model such that for each
susceptible node $j$ in state $s$ at time step $t$, the probability of $j$
becoming ``contagious'' at time step $t+1$ conditioned on $X^t$ is a Bernoulli
variable of parameter $f(\theta_j \cdot X^t)$:
\begin{equation}
    \label{eq:glm}
    \mathbb{P}(X^{t+1}_j = 1|X^t)
    = f(\theta_j \cdot X^t)
\end{equation}
where $f: \reals \rightarrow [0,1]$
\end{definition}

In other words, each generalized linear cascade provides, for each node $j \in
V$ a series of measurements ${(X^t, X^{t+1}_j)}_{t \in {\cal T}_j}$ sampled from
a generalized linear model. Note also that $\E[X^{t+1}_i\,|\,X^t]
= f(\inprod{\theta_i}{X^t})$. As such, $f$ can be interpreted as the inverse
link function of our generalized linear cascade model.




\subsection{Examples}

\subsubsection{Independent Cascade Model}

In the independent cascade model, nodes can be either susceptible, contagious
or immune. At $t=0$, all source nodes are ``contagious'' and all remaining
nodes are ``susceptible''. At each time step $t$, for each edge $(i,j)$ where
$j$ is susceptible and $i$ is contagious, $i$ attempts to infect $j$ with
probability $p_{i,j}\in[0,1]$; the infection attempts are mutually independent.
If $i$ succeeds, $j$ will become contagious at time step $t+1$. Regardless of
$i$'s success, node $i$ will be immune at time $t+1$, such that nodes
stay contagious for only one time step. The cascade process terminates when no
contagious nodes remain.

If we denote by $X^t$ the indicator variable of the set of contagious nodes at
time step $t$, then if $j$ is susceptible at time step $t+1$, we have:
\begin{displaymath}
    \P\big[X^{t+1}_j = 1\,|\, X^{t}\big]
    = 1 - \prod_{i = 1}^m {(1 - p_{i,j})}^{X^t_i}.
\end{displaymath}
Defining $\Theta_{i,j} \defeq \log(\frac{1}{1-p_{i,j}})$, this can be rewritten as:
\begin{align*}\label{eq:ic}
    \tag{IC}
    \P\big[X^{t+1}_j = 1\,|\, X^{t}\big]
    &= 1 - \prod_{i = 1}^m e^{-\Theta_{i,j}X^t_i}\\
    &= 1 - e^{-\inprod{\Theta_j}{X^t}}
\end{align*}

Therefore, the independent cascade model is a Generalized Linear Cascade model
with inverse link function $f : z \mapsto 1 - e^{-z}$. Note that to write the
Independent Cascade Model as a Generalized Linear Cascade Model, we had to
introduce the change of variable $\Theta_{i,j} = \log(\frac{1}{1-p_{i,j}})$.
The recovery results in Section~\ref{sec:results} pertain to the $\Theta_j$
parameters. Fortunately, the following lemma shows that the recovery error on
$\Theta_j$ is an upper bound on the error on the original $p_j$ parameters.

\begin{lemma}
    \label{lem:transform}
    $\|\hat{\theta} - \theta^* \|_2 \geq \|\hat{p} - p^*\|_2$.
\end{lemma}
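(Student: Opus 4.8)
The plan is to reduce the inequality to a coordinatewise Lipschitz bound on the change of variable. Recall that the transformation relating the two parametrizations is $p_i = 1 - e^{-\theta_i}$, obtained by inverting $\Theta_{i,j} = \log\frac{1}{1-p_{i,j}}$. Writing $g(z) = 1 - e^{-z}$, we have $\hat{p}_i = g(\hat{\theta}_i)$ and $p^*_i = g(\theta^*_i)$ for every coordinate $i$, so it suffices to control $|g(\hat{\theta}_i) - g(\theta^*_i)|$ in terms of $|\hat{\theta}_i - \theta^*_i|$ and then recombine.

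First I would establish that $g$ is $1$-Lipschitz on the nonnegative reals. Since $g'(z) = e^{-z}$, we have $0 < g'(z) \le 1$ for all $z \ge 0$. Because the original probabilities satisfy $p^*_i \in [0,1]$, the true parameters obey $\theta^*_i = \log\frac{1}{1-p^*_i} \ge 0$, and the estimate $\hat{\theta}_i$ is likewise nonnegative since the parameters of the diffusion process are nonnegative by construction (the recovery problem is posed over the nonnegative orthant). By the mean value theorem applied on the segment between $\hat{\theta}_i$ and $\theta^*_i$ — which lies entirely in $[0,\infty)$ — there exists a point $\xi_i \ge 0$ with $|g(\hat{\theta}_i) - g(\theta^*_i)| = e^{-\xi_i}\,|\hat{\theta}_i - \theta^*_i| \le |\hat{\theta}_i - \theta^*_i|$.

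Finally I would assemble the coordinatewise bounds into the $\ell_2$ statement. Squaring and summing over all coordinates,
\begin{equation*}
\|\hat{p} - p^*\|_2^2 = \sum_i |g(\hat{\theta}_i) - g(\theta^*_i)|^2 \le \sum_i |\hat{\theta}_i - \theta^*_i|^2 = \|\hat{\theta} - \theta^*\|_2^2 ,
\end{equation*}
and taking square roots yields the claim.

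The only point requiring care — and the step I expect to be the main subtlety rather than a genuine difficulty — is the domain restriction guaranteeing that the Lipschitz constant is at most $1$: the bound $e^{-\xi_i} \le 1$ relies on $\xi_i \ge 0$, which in turn requires both the true and estimated parameters to be nonnegative. If the estimator were permitted to take negative values, then $g'$ could exceed $1$ and the inequality would fail locally, so I would make the nonnegativity constraint explicit before invoking the mean value theorem.
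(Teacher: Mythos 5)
Your proof is correct and follows essentially the same route as the paper: both arguments reduce the $\ell_2$ bound to the coordinatewise inequality $|\hat{p}_i - p^*_i| \leq |\hat{\theta}_i - \theta^*_i|$, the paper deriving it from $\log x \geq 1 - \frac{1}{x}$ applied to the forward map $\theta = \log\frac{1}{1-p}$, and you from the mean value theorem applied to the inverse map $g(z) = 1 - e^{-z}$, which are equivalent elementary facts. Your explicit flagging of the nonnegativity restriction (needed for the Lipschitz constant to be $1$) is a point the paper leaves implicit, since there $p, p' \in [0,1)$ is assumed throughout.
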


\subsubsection{The Linear Voter Model}

In the Linear Voter Model, nodes can be either \emph{red} or \emph{blue}.
Without loss of generality, we can suppose that the \emph{blue} nodes are
contagious. The parameters of the graph are normalized such that $\forall i, \
\sum_j \Theta_{i,j} = 1$.  Each round, every node $j$
independently chooses one of its neighbors with probability $\Theta_{i,j}$ and
adopts their color. The cascades stops at a fixed horizon time $T$ or if all
nodes are of the same color.  If we denote by $X^t$ the indicator variable of
the set of blue nodes at time step $t$, then we have:
\begin{equation}
\mathbb{P}\left[X^{t+1}_j = 1 | X^t \right] = \sum_{i=1}^m \Theta_{i,j} X_i^t =
\inprod{\Theta_j}{X^t}
\tag{V}
\end{equation}

Thus, the linear voter model is a Generalized Linear Cascade model
with inverse link function $f: z \mapsto z$.

\subsubsection{Discretization of Continuous Model}

Another motivation for the Generalized Linear Cascade model is that it captures
the time-discretized formulation of the well-studied  continuous-time
independent cascade model with exponential transmission function (CICE)
of~\cite{GomezRodriguez:2010, Abrahao:13, Daneshmand:2014}. Assume that the
temporal resolution of the discretization is $\varepsilon$, \emph{i.e.} all
nodes whose (continuous) infection time is within the interval $[k\varepsilon,
    (k+1)\varepsilon)$ are considered infected at (discrete) time step $k$. Let
    $X^k$ be the indicator vector of the set of nodes `infected' before or
    during the $k^{th}$ time interval.  Note that contrary to the discrete-time
    independent cascade model, $X^k_j = 1 \implies X^{k+1}_j = 1$, that is,
    there is no immune state and nodes remain contagious forever.

Let $\text{Exp}(p)$ be an exponentially-distributed random variable of parameter $p$
and let $\Theta_{i,j}$ be the rate of transmission along directed edge $(i,j)$
in the CICE model.  By the memoryless property of the exponential, if $X^k_j
\neq 1$:
\begin{multline*}
  \mathbb{P}(X^{k+1}_j = 1 | X^k) = \mathbb{P}(\min_{i \in {\cal N}(j)}
    \text{Exp}(\Theta_{i,j}) \leq \epsilon) \\
    = \mathbb{P}(\text{Exp}( \sum_{i=1}^m \Theta_{i,j} X^t_i) \leq \epsilon)
  = 1 - e^{- \epsilon \Theta_j \cdot X^t}
\end{multline*}
Therefore, the $\epsilon$-discretized CICE-induced process is a
Generalized Linear Cascade model with inverse link function $f:z\mapsto
1-e^{-\epsilon\cdot z}$.

\subsubsection{Logistic Cascades}
\label{sec:logistic_cascades}
``Logistic cascades'' is the specific case where the inverse link function is
given by the logistic function
$f(z) = 1/(1+e^{-z + t})$.
Intuitively, this captures the idea that there is a threshold $t$ such that
when the sum of the parameters of the infected parents of a node is larger than
the threshold, the probability of getting infected is close to one. This is
a smooth approximation of the hard threshold rule of the Linear Threshold
Model~\cite{Kempe:03}. As we will see later in the analysis, for logistic
cascades, the graph inference problem becomes a linear inverse problem.





\subsection{Maximum Likelihood Estimation}
\label{sec:mle}

\begin{figure}
  \includegraphics[scale=.4]{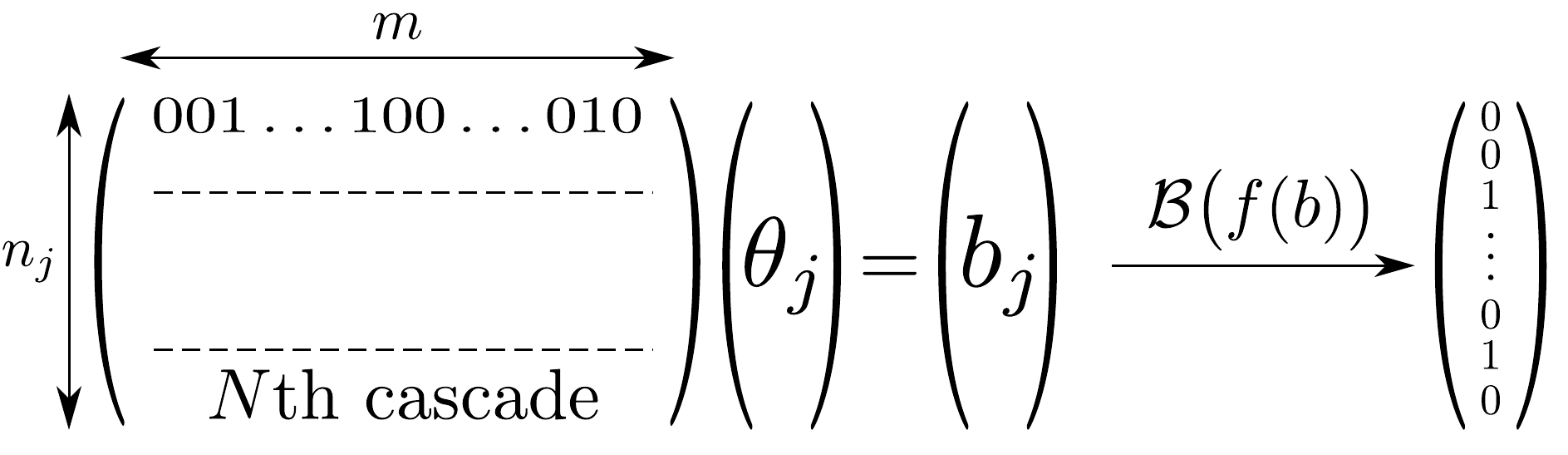}
  \caption{Illustration of the sparse-recovery approach. Our objective is to
  recover the unknown weight vector $\theta_j$ for each node $j$. We observe a
Bernoulli realization whose parameters are given by applying $f$ to the
matrix-vector product, where the measurement matrix encodes which nodes are
``contagious'' at each time step.}
\vspace{-1em}
\end{figure}

Inferring the model parameter $\Theta$ from observed influence cascades is the
central question of the present work. Recovering the edges in $E$ from observed
influence cascades is a well-identified problem known as the \emph{Network
Inference} problem. However, recovering the influence parameters is no less
important. In this work we focus on recovering $\Theta$, noting that the set of
edges $E$ can then be recovered through the following equivalence: $(i,j)\in
E\Leftrightarrow  \Theta_{i,j} \neq 0$

Given observations $(x^1,\ldots,x^n)$ of a cascade model, we can recover
$\Theta$ via Maximum Likelihood Estimation (MLE). Denoting by $\mathcal{L}$ the
log-likelihood function, we consider the following $\ell_1$-regularized MLE
problem:
\begin{displaymath}
    \hat{\Theta} \in \argmax_{\Theta} \frac{1}{n}
    \mathcal{L}(\Theta\,|\,x^1,\ldots,x^n) - \lambda\|\Theta\|_1
\end{displaymath}
where $\lambda$ is the regularization factor which helps prevent
overfitting and controls the sparsity of the solution.

The generalized linear cascade model is decomposable in the following sense:
given Definition~\ref{def:glcm}, the log-likelihood can be written as the sum
of $m$ terms, each term $i\in\{1,\ldots,m\}$ only depending on $\theta_i$.
Since this is equally true for $\|\Theta\|_1$, each column $\theta_i$ of
$\Theta$ can be estimated by a separate optimization program:
\begin{equation}\label{eq:pre-mle}
    \hat{\theta}_i \in \argmax_{\theta} \mathcal{L}_i(\theta_i\,|\,x^1,\ldots,x^n)
    - \lambda\|\theta_i\|_1
\end{equation}
where we denote by ${\cal T}_i$ the time steps at which node $i$ is susceptible
and:
\begin{multline*}
    \mathcal{L}_i(\theta_i\,|\,x^1,\ldots,x^n) = \frac{1}{|{\cal T}_i|}
    \sum_{t\in {\cal T}_i } x_i^{t+1}\log f(\inprod{\theta_i}{x^{t}}) \\ + (1 -
    x_i^{t+1})\log\big(1-f(\inprod{\theta_i}{x^t})\big)
\end{multline*}

In the case of the voter model, the measurements include all time steps until
we reach the time horizon $T$ or the graph coalesces to a single state. For the
independent cascade model, the measurements include all time steps until node
$i$ becomes contagious, after which its behavior is deterministic.  Contrary to
prior work, our results depend on the number of measurements and not the number
of cascades.

\paragraph{Regularity assumptions}

To solve program~\eqref{eq:pre-mle} efficiently, we would like it to be convex.
A sufficient condition is to assume that $\mathcal{L}_i$ is concave, which is
the case if $f$ and $(1-f)$ are both log-concave. Remember that a
twice-differentiable function $f$ is log-concave iff. $f''f \leq f'^2$.  It is
easy to verify this property for $f$ and $(1-f)$ in the Independent Cascade
Model and Voter Model.

Furthermore, the data-dependent bounds in Section~\ref{sec:main_theorem} will
require the following regularity assumption on the inverse link function $f$:
there exists $\alpha\in(0,1)$ such that
\begin{equation}
  \tag{LF}
  \max \big\{ | (\log f)'(z_x) |, |(\log (1-f))'(z_x) | \big\}
  \leq \frac{1}{\alpha}
\end{equation}
for all $z_x\defeq\inprod{\theta^*}{x}$ such that
$f(z_x)\notin\{0,1\}$.

In the voter model, $\frac{f'(z)}{f(z)} = \frac{1}{z}$ and
$\frac{f'(z)}{(1-f)(z)} =
\frac{1}{1-z}$. Hence (LF) will hold as soon as $\alpha\leq \Theta_{i,j}\leq
1-\alpha$ for all $(i,j)\in E$ which is always satisfied for some $\alpha$ for
non-isolated nodes.  In the Independent Cascade Model, $\frac{f'(z)}{f(z)} =
\frac{1}{e^{z}-1}$ and $\frac{f'(z)}{(1-f)(z)} = 1$. Hence (LF) holds as soon
as $p_{i,j}\geq \alpha$ for all $(i,j)\in E$ which is always satisfied for some
$\alpha\in(0,1)$.

For the data-independent bound of Proposition~\ref{prop:fi}, we will require the
following additional regularity assumption:
\begin{equation}
  \tag{LF2}
  \max \big\{ | (\log f)''(z_x) |, |(\log (1-f))''(z_x) | \big\}
  \leq \frac{1}{\alpha}
\end{equation}
for some $\alpha\in(0, 1)$ and for all $z_x\defeq\inprod{\theta^*}{x}$ such
that $f(z_x)\notin\{0,1\}$. It is again easy to see that this condition is
verified for the Independent Cascade Model and the Voter model for the same
$\alpha\in(0,1)$.

\paragraph{Convex constraints} The voter model is only defined when
$\Theta_{i,j}\in (0,1)$ for all $(i,j)\in E$. Similarly the independent cascade
model is only defined when $\Theta_{i,j}> 0$. Because the likelihood function
$\mathcal{L}_i$ is equal to $-\infty$ when the parameters are outside of the
domain of definition of the models, these contraints do not need to appear
explicitly in the optimization program.

In the specific case of the voter model, the constraint $\sum_j \Theta_{i,j}
= 1$ will not necessarily be verified by the estimator obtained in
\eqref{eq:pre-mle}. In some applications, the experimenter might not need this
constraint to be verified, in which case the results in
Section~\ref{sec:results} still give a bound on the recovery error. If this
constraint needs to be satisfied, then by Lagrangian duality, there exists
a $\lambda\in \reals$ such that adding $\lambda\big(\sum_{j}\theta_j
- 1\big)$ to the objective function of~\eqref{eq:pre-mle} enforces the
constraint. Then, it suffices to apply the results of Section~\ref{sec:results}
to the augmented objective to obtain the same recovery guarantees. Note that
the added term is linear and will easily satisfy all the required regularity
assumptions.

\section{Results}
\label{sec:results}
In this section, we apply the sparse recovery framework to analyze under which
assumptions our program~\eqref{eq:pre-mle} recovers the true parameter
$\theta_i$ of the cascade model. Furthermore, if we can estimate $\theta_i$ to
a sufficiently good accuracy, it is then possible to recover the support of
$\theta_i$ by simple thresholding, which provides a solution to the standard
Network Inference problem.

We will first give results in the exactly sparse setting in which $\theta_i$
has a support of size exactly $s$. We will then relax this sparsity constraint
and give results in the \emph{stable recovery} setting where $\theta_i$ is
approximately $s$-sparse.

As mentioned in Section~\ref{sec:mle}, the maximum likelihood estimation
program is decomposable. We will henceforth focus on a single node $i\in V$ and
omit the subscript $i$ in the notations when there is no ambiguity. The
recovery problem is now the one of estimating a single vector $\theta^*$ from
a set $\mathcal{T}$ of observations. We will write $n\defeq |\mathcal{T}|$.

\subsection{Main Theorem}
\label{sec:main_theorem}

In this section, we analyze the case where $\theta^*$ is exactly sparse. We
write $S\defeq\text{supp}(\theta^*)$ and $s=|S|$. Recall, that $\theta_i$ is the
vector of weights for all edges \emph{directed at} the node we are solving for.
In other words, $S$ is the set of all nodes susceptible to influence node $i$,
also referred to as its parents. Our main theorem will rely on the now standard
\emph{restricted eigenvalue condition} introduced
by~\cite{bickel2009simultaneous}.

\begin{definition}
    Let $\Sigma\in\mathcal{S}_m(\reals)$ be a real symmetric matrix and $S$ be
    a subset of $\{1,\ldots,m\}$. Defining $\mathcal{C}(S)\defeq
    \{X\in\reals^m\,:\,\|X_{S^c}\|_1\leq
    3\|X_S\|_1\}$. We say that $\Sigma$ satisfies the
    $(S,\gamma)$-\emph{restricted eigenvalue condition} iff:
\begin{equation}
    \forall X \in {\cal C(S)}, X^T \Sigma X \geq \gamma \|X\|_2^2
\tag{RE}
\label{eq:re}
\end{equation}
\end{definition}

A discussion of the $(S,\gamma)$-{\bf(RE)} assumption in the context of
generalized linear cascade models can be found in Section~\ref{sec:re}. In our
setting we require that the {\bf(RE)}-condition holds for the Hessian of the
log-likelihood function $\mathcal{L}$: it essentially captures the fact that
the binary vectors of the set of active nodes (\emph{i.e} the measurements) are
not \emph{too} collinear.

\begin{theorem}
\label{thm:main}
Assume the Hessian $\nabla^2\mathcal{L}(\theta^*)$ satisfies the
$(S,\gamma)$-{\bf(RE)} for some $\gamma > 0$ and that {\bf (LF)} holds for some
$\alpha > 0$. For any $\delta\in(0,1)$, let $\hat{\theta}$ be the solution of
\eqref{eq:pre-mle} with $\lambda \defeq 2\sqrt{\frac{\log m}{\alpha n^{1
- \delta}}}$, then:
\begin{equation}
    \label{eq:sparse}
    \|\hat \theta - \theta^* \|_2
    \leq \frac{6}{\gamma} \sqrt{\frac{s \log m}{\alpha n^{1-\delta}}}
\quad
\text{w.p.}\;1-\frac{1}{e^{n^\delta \log m}}
\end{equation}
\end{theorem}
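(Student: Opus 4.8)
The plan is to follow the standard primal analysis of $\ell_1$-regularized M-estimators, specialized to the concave log-likelihood of the generalized linear cascade model. Throughout, write $\Delta \defeq \hat\theta - \theta^*$ for the error vector, and recall $S = \text{supp}(\theta^*)$ with $s = |S|$. The starting point is the optimality of $\hat\theta$ in program~\eqref{eq:pre-mle}: since $\hat\theta$ maximizes $\mathcal{L}(\theta) - \lambda\|\theta\|_1$, comparing its value to that of $\theta^*$ gives the basic inequality $\mathcal{L}(\hat\theta) - \mathcal{L}(\theta^*) \geq \lambda(\|\hat\theta\|_1 - \|\theta^*\|_1)$. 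First I would Taylor-expand the left-hand side around $\theta^*$, writing $\mathcal{L}(\hat\theta) - \mathcal{L}(\theta^*) = \inprod{\nabla\mathcal{L}(\theta^*)}{\Delta} + R(\Delta)$, where $R(\Delta) = \tfrac12\Delta^T\nabla^2\mathcal{L}(\tilde\theta)\Delta \le 0$ is the (nonpositive, by log-concavity of $f$ and $1-f$) second-order remainder evaluated at some $\tilde\theta$ on the segment $[\theta^*,\hat\theta]$. Bounding the linear term by H\"older's inequality, $|\inprod{\nabla\mathcal{L}(\theta^*)}{\Delta}| \le \|\nabla\mathcal{L}(\theta^*)\|_\infty \|\Delta\|_1$, and using $\|\hat\theta\|_1 - \|\theta^*\|_1 \ge \|\Delta_{S^c}\|_1 - \|\Delta_S\|_1$, the whole argument reduces to the event $\mathcal{E} \defeq \{\|\nabla\mathcal{L}(\theta^*)\|_\infty \le \lambda/2\}$.

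On the event $\mathcal{E}$, standard rearrangement of the basic inequality yields two consequences at once. Using $-R(\Delta)\ge 0$ forces the \emph{cone condition} $\|\Delta_{S^c}\|_1 \le 3\|\Delta_S\|_1$, i.e. $\Delta \in \mathcal{C}(S)$; and it produces the curvature bound $-R(\Delta) \le \tfrac{3\lambda}{2}\|\Delta_S\|_1 \le \tfrac{3\lambda}{2}\sqrt{s}\,\|\Delta\|_2$, the last step using $\|\Delta_S\|_1 \le \sqrt s\,\|\Delta_S\|_2 \le \sqrt s\,\|\Delta\|_2$. Because $\Delta$ lies in $\mathcal{C}(S)$, I would then invoke the $(S,\gamma)$-\textbf{(RE)} condition on the (negative) Hessian to lower bound the curvature, $-R(\Delta) = \tfrac12\Delta^T(-\nabla^2\mathcal{L})\Delta \ge \tfrac{\gamma}{2}\|\Delta\|_2^2$, cf.~\eqref{eq:re}. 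Combining the two bounds gives $\tfrac{\gamma}{2}\|\Delta\|_2^2 \le \tfrac{3\lambda}{2}\sqrt s\,\|\Delta\|_2$, hence $\|\Delta\|_2 \le \tfrac{3\lambda\sqrt s}{\gamma}$; substituting $\lambda = 2\sqrt{\log m/(\alpha n^{1-\delta})}$ produces exactly the claimed $\tfrac{6}{\gamma}\sqrt{s\log m/(\alpha n^{1-\delta})}$.

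It remains to show $\mathcal{E}$ holds with probability at least $1 - e^{-n^\delta\log m}$, the only probabilistic ingredient. The $k$-th coordinate of the score is $\partial_k\mathcal{L}(\theta^*) = \tfrac1n\sum_{t\in\mathcal{T}} x_k^t\big[x^{t+1}(\log f)'(z_t) + (1-x^{t+1})(\log(1-f))'(z_t)\big]$ with $z_t = \inprod{\theta^*}{x^t}$. Conditioned on $x^t$, the variable $x^{t+1}$ is Bernoulli with mean $f(z_t)$ by Definition~\ref{def:glcm}, so each summand has conditional mean zero and the sequence is a martingale-difference sequence for the natural filtration; this is precisely where the temporal correlation of the measurements is handled cleanly. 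By assumption \textbf{(LF)} each summand is bounded in absolute value by $1/\alpha$, so I would apply the Azuma--Hoeffding inequality to get $\P[|\partial_k\mathcal{L}(\theta^*)|\ge \lambda/2] \le 2\exp(-c\,\alpha\, n^\delta\log m)$ after substituting $\lambda/2 = \sqrt{\log m/(\alpha n^{1-\delta})}$, and finish with a union bound over the $m$ coordinates, the $\log m$ pickup being absorbed into the exponent.

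The two steps I expect to require the most care are the following. The genuine obstacle is bridging the intermediate Hessian $\nabla^2\mathcal{L}(\tilde\theta)$ appearing in $R(\Delta)$ to the \textbf{(RE)} hypothesis, which is assumed only at $\theta^*$. For the linear models (the voter and logistic cascades) the Hessian is constant and the issue is vacuous, but for the independent cascade model one must argue that the restricted curvature does not degrade along the segment $[\theta^*,\hat\theta]$ --- e.g. via a self-bounding argument using \textbf{(LF2)} together with a localization step, or by upgrading the pointwise (RE) to a restricted-strong-convexity statement valid on a neighborhood. The second, more routine, delicate point is tracking the exact constants in the concentration step so that the union-bound factor $m$ is dominated and the stated failure probability $e^{-n^\delta\log m}$ is recovered.
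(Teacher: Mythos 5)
Your proposal is correct and follows essentially the same route as the paper: your deterministic argument (basic inequality from optimality, cone condition, restricted-eigenvalue lower bound on the curvature, yielding $\|\Delta\|_2 \leq 3\lambda\sqrt{s}/\gamma$) is precisely an inlined proof of the result of \citet{Negahban:2009} that the paper invokes as Lemma~\ref{lem:negahban} with $\tau_{\mathcal{L}}=0$ and $\kappa_{\mathcal{L}}=\gamma/2$, and your concentration step (martingale differences bounded by $1/\alpha$ via (LF), Azuma--Hoeffding, union bound over the $m$ coordinates) is exactly the paper's Lemma~\ref{lem:ub}. The subtlety you flag at the end --- that {\bf(RE)} is assumed only at $\theta^*$ while the Taylor remainder involves the Hessian at an intermediate point of the segment $[\theta^*,\hat\theta]$ --- is a genuine gap, but it is present in the paper's own proof as well, which asserts without further argument that convexity plus {\bf(RE)} at $\theta^*$ implies the restricted strong convexity condition \eqref{eq:rc}.
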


Note that we have expressed the convergence rate in the number of measurements
$n$, which is different from the number of cascades. For example, in the case
of the voter model with horizon time $T$ and for $N$ cascades, we can expect
a number of measurements proportional to $N\times T$.

Theorem~\ref{thm:main} is a consequence of Theorem~1 in \cite{Negahban:2009}
which gives a bound on the convergence rate of regularized estimators. We state
their theorem in the context of $\ell_1$ regularization in
Lemma~\ref{lem:negahban}.

\begin{lemma} \label{lem:negahban}
Let ${\cal C}(S) \defeq \{ \Delta \in \mathbb{R}^m\,|\,\|\Delta_S\|_1 \leq
3 \|\Delta_{S^c}\|_1 \}$. Suppose that:
\begin{multline}
    \label{eq:rc}
    \forall \Delta \in {\cal C}(S), \;
    {\cal L}(\theta^* + \Delta) - {\cal L}(\theta^*)\\
    - \inprod{\nabla {\cal L}(\theta^*)}{\Delta}
    \geq \kappa_{\cal L} \|\Delta\|_2^2 - \tau_{\cal L}^2(\theta^*)
\end{multline}
for some $\kappa_{\cal L} > 0$ and function $\tau_{\cal L}$. Finally suppose
that $\lambda \geq 2 \|\nabla {\cal L}(\theta^*)\|_{\infty}$, then if
$\hat{\theta}_\lambda$ is the solution of \eqref{eq:pre-mle}:
\begin{displaymath}
\|\hat \theta_\lambda - \theta^* \|_2^2
\leq 9 \frac{\lambda^2 s}{\kappa_{\cal L}}
+ \frac{\lambda}{\kappa_{\cal L}^2} 2 \tau^2_{\cal L}(\theta^*)
\end{displaymath}
\end{lemma}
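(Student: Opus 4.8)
The plan is to recognize Lemma~\ref{lem:negahban} as the specialization, to the $\ell_1$ norm and an $s$-sparse model subspace, of the general $M$-estimation bound of \citet{Negahban:2009}. The three ingredients driving that argument are: (i) the \emph{decomposability} of the $\ell_1$ norm with respect to the subspace pair attached to the support $S$; (ii) the restricted strong convexity inequality~\eqref{eq:rc}, supplied here as a hypothesis; and (iii) the choice $\lambda \geq 2\|\nabla\mathcal{L}(\theta^*)\|_{\infty}$, which forces the regularizer to dominate the ``noise'' $\nabla\mathcal{L}(\theta^*)$ measured in the dual ($\ell_\infty$) norm. I would reproduce the specialized argument in two main steps, working throughout with the error vector $\Delta\defeq\hat\theta_\lambda-\theta^*$.

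\textbf{Step 1: the error lies in the cone.} I would first write the \emph{basic inequality} coming from optimality of $\hat\theta_\lambda$ for program~\eqref{eq:pre-mle}: comparing the penalized objective at $\hat\theta_\lambda$ and at $\theta^*$ gives $\mathcal{L}(\theta^*+\Delta)-\mathcal{L}(\theta^*)\geq \lambda\big(\|\hat\theta_\lambda\|_1-\|\theta^*\|_1\big)$ (up to the sign convention relating the concave log-likelihood being maximized to the convex loss being minimized). Bounding $\inprod{\nabla\mathcal{L}(\theta^*)}{\Delta}\leq\|\nabla\mathcal{L}(\theta^*)\|_{\infty}\|\Delta\|_1\leq\tfrac{\lambda}{2}\|\Delta\|_1$ by Hölder's inequality and the choice of $\lambda$, and splitting the penalty over $S$ and $S^c$ via $\|\theta^*+\Delta\|_1\geq\|\theta^*\|_1+\|\Delta_{S^c}\|_1-\|\Delta_S\|_1$ (triangle inequality, since $\theta^*$ is supported on $S$), I would conclude $\|\Delta_{S^c}\|_1\leq 3\|\Delta_S\|_1$, i.e.\ $\Delta\in\mathcal{C}(S)$. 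This is the step where decomposability is essential, and where the tolerance term $\tau_{\mathcal{L}}^2(\theta^*)$ first enters in the stable-recovery case.

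\textbf{Step 2: apply RSC, then solve a scalar quadratic.} Since $\Delta\in\mathcal{C}(S)$, hypothesis~\eqref{eq:rc} yields $\mathcal{L}(\theta^*+\Delta)-\mathcal{L}(\theta^*)-\inprod{\nabla\mathcal{L}(\theta^*)}{\Delta}\geq\kappa_{\mathcal{L}}\|\Delta\|_2^2-\tau_{\mathcal{L}}^2(\theta^*)$. Combining this lower bound on the remainder with the upper bound extracted from the basic inequality (which is linear in $\lambda$ and $\|\Delta\|_1$) produces a relation of the form $\kappa_{\mathcal{L}}\|\Delta\|_2^2\leq c\,\lambda\,\|\Delta\|_1+\tau_{\mathcal{L}}^2(\theta^*)$. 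Cauchy--Schwarz on the support, $\|\Delta_S\|_1\leq\sqrt{s}\,\|\Delta\|_2$, together with the cone inclusion $\|\Delta\|_1\leq 4\|\Delta_S\|_1\leq 4\sqrt{s}\,\|\Delta\|_2$, converts this into a scalar quadratic inequality $\kappa_{\mathcal{L}}\|\Delta\|_2^2\leq c'\lambda\sqrt{s}\,\|\Delta\|_2+\tau_{\mathcal{L}}^2(\theta^*)$ in $x=\|\Delta\|_2$. The factor $\sqrt{s}$ is precisely the subspace compatibility constant of $\ell_1$ restricted to an $s$-dimensional coordinate subspace, and is the source of the $s$ in the final bound. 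Solving $a x^2\leq bx+c$ (equivalently, completing the square and splitting the two source terms with $(u+v)^2\leq 2u^2+2v^2$) gives $\|\Delta\|_2^2\lesssim \lambda^2 s/\kappa_{\mathcal{L}}^2+\lambda\tau_{\mathcal{L}}^2(\theta^*)/\kappa_{\mathcal{L}}$, matching the claimed expression up to the explicit constants.

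The main place to be careful is not any single estimate but the bookkeeping: keeping the sign conventions coherent between maximizing the concave log-likelihood and the minimization convention under which \eqref{eq:rc} is stated, so that the RSC lower bound and the basic inequality point in compatible directions; and propagating $\tau_{\mathcal{L}}^2(\theta^*)$ cleanly through the final quadratic solve, since it is exactly what lets the single statement cover both the exactly sparse regime (where the analogous term vanishes) and the stable-recovery regime. The conceptual heart is the cone-membership argument of Step~1; the remainder is the deterministic algebra of \citet{Negahban:2009}, so in practice I would cite their Theorem~1 and simply verify that the $\ell_1$ norm is decomposable with compatibility constant $\sqrt{s}$ rather than re-deriving the whole chain.
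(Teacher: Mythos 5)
Your proposal is correct and takes essentially the same route as the paper, which offers no proof of this lemma at all: it simply imports Theorem~1 of \citet{Negahban:2009} specialized to $\ell_1$ regularization, and your two steps (basic inequality plus decomposability of $\|\cdot\|_1$ to get cone membership, then RSC plus the compatibility constant $\sqrt{s}$ and a scalar quadratic solve) are precisely the standard argument behind that cited theorem. Two harmless discrepancies worth noting: your cone orientation $\|\Delta_{S^c}\|_1 \leq 3\|\Delta_S\|_1$ is the correct one, consistent with the paper's own (RE) definition (the lemma's displayed cone has $S$ and $S^c$ transposed, evidently a typo), and your final constants follow Negahban's form $9\lambda^2 s/\kappa_{\mathcal{L}}^2 + 2\lambda\tau_{\mathcal{L}}^2(\theta^*)/\kappa_{\mathcal{L}}$ rather than the paper's transcription, which swaps the powers of $\kappa_{\mathcal{L}}$ in the two terms.
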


To prove Theorem~\ref{thm:main}, we apply Lemma~\ref{lem:negahban} with
$\tau_{\mathcal{L}}=0$. Since $\mathcal{L}$ is twice differentiable and convex,
assumption \eqref{eq:rc} with $\kappa_{\mathcal{L}}=\frac{\gamma}{2}$ is implied
by the (RE)-condition. For a good convergence rate, we must find the smallest
possible value of $\lambda$ such that $\lambda \geq 2
\|\nabla\mathcal{L}\theta^*\|_{\infty}$.  The upper bound on the $\ell_{\infty}$
norm of $\nabla\mathcal{L}(\theta^*)$ is given by Lemma~\ref{lem:ub}.

\begin{lemma}
\label{lem:ub}
Assume {\bf(LF)} holds for some $\alpha>0$. For any $\delta\in(0,1)$:
\begin{displaymath}
    \|\nabla {\cal L}(\theta^*)\|_{\infty}
    \leq 2 \sqrt{\frac{\log m}{\alpha n^{1 - \delta}}}
    \quad
    \text{w.p.}\; 1-\frac{1}{e^{n^\delta \log m}}
\end{displaymath}
\end{lemma}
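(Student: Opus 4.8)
The plan is to write each coordinate of $\nabla\mathcal{L}(\theta^*)$ as a normalized sum of a \emph{martingale difference sequence}, bound its increments using the regularity assumption \textbf{(LF)}, apply a martingale concentration inequality coordinate-wise, and finish with a union bound over the $m$ coordinates.

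First I would differentiate the per-node log-likelihood from Section~\ref{sec:mle}. Writing $z_t\defeq\inprod{\theta^*}{x^t}$, setting $Y_t\defeq x^{t+1}(\log f)'(z_t)+(1-x^{t+1})(\log(1-f))'(z_t)$, and using $(\log f)'=f'/f$ together with $(\log(1-f))'=-f'/(1-f)$, the $k$-th coordinate of the gradient at $\theta^*$ is
\begin{equation*}
    \frac{\partial\mathcal{L}}{\partial\theta_k}(\theta^*)
    = \frac{1}{n}\sum_{t\in\mathcal{T}} Y_t\, x^t_k .
\end{equation*}
The key structural observation comes next. Let $\mathcal{F}_t$ be the filtration generated by the cascade up to time $t$; then $z_t$ and $x^t_k$ are $\mathcal{F}_t$-measurable, and Definition~\ref{def:glcm} gives $\E[x^{t+1}\,|\,\mathcal{F}_t]=f(z_t)$. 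Substituting yields
\begin{equation*}
    \E[Y_t\,|\,\mathcal{F}_t]
    = f(z_t)\frac{f'(z_t)}{f(z_t)} - \big(1-f(z_t)\big)\frac{f'(z_t)}{1-f(z_t)}
    = 0 ,
\end{equation*}
so $D^{(k)}_t\defeq Y_t x^t_k$ is a martingale difference sequence. This is the step that really matters: because the measurements $x^t$ are produced by a \emph{single} diffusion they are strongly correlated across time, so a vanilla i.i.d.\ Hoeffding bound does not apply, and the martingale structure is precisely what replaces independence.

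Then I would bound the increments. Since $x^{t+1}\in\{0,1\}$, the variable $Y_t$ equals either $(\log f)'(z_t)$ or $(\log(1-f))'(z_t)$, so \textbf{(LF)} gives $|Y_t|\le 1/\alpha$ and hence $|D^{(k)}_t|\le 1/\alpha$; a parallel computation bounds the conditional variance $\E[(D^{(k)}_t)^2\,|\,\mathcal{F}_t]\le f'(z_t)^2/\big(f(z_t)(1-f(z_t))\big)$, which for the models of interest is again controlled by $1/\alpha$. Applying a martingale tail inequality (Azuma--Hoeffding, or a Freedman/Bernstein bound if one exploits the variance) to the $n$ increments and taking the threshold $u=2\sqrt{\log m/(\alpha n^{1-\delta})}$ makes the exponent scale like $\alpha n u^2\sim n^\delta\log m$, so each coordinate exceeds $u$ with probability at most $\exp(-2n^\delta\log m)$. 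A union bound over the $m$ coordinates gives $\|\nabla\mathcal{L}(\theta^*)\|_\infty\le u$ with probability at least $1-m\exp(-2n^\delta\log m)\ge 1-\exp(-n^\delta\log m)$, which is exactly the stated conclusion (and $u$ is precisely the value of $\lambda$ used in Theorem~\ref{thm:main}). The role of $\delta$ is transparent here: enlarging the threshold by the factor $n^{\delta/2}$ over the natural $\sqrt{\log m/n}$ scale is what upgrades the failure probability from polynomial to $e^{-n^\delta\log m}$.

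I expect the main obstacle to be twofold, and technical rather than conceptual: (i) correctly setting up the martingale so that the strong temporal correlation of the measurements is handled, which is resolved by conditioning on $\mathcal{F}_t$ as above; and (ii) dealing with the fact that $\mathcal{T}$ is determined by the process (for the independent cascade model it is essentially a stopping time, and measurements may be pooled across several cascades). The latter is handled by noting that the susceptibility indicator $\mathbf{1}[t\in\mathcal{T}]$ is $\mathcal{F}_t$-measurable, so inserting it keeps $D^{(k)}_t$ a martingale difference sequence and preserves the increment bound; one then applies the concentration inequality conditionally on the value of $n=|\mathcal{T}|$.
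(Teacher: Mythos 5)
Your proposal is correct and follows essentially the same route as the paper's own proof: write each gradient coordinate as a normalized sum of martingale differences (zero conditional mean follows from $\E[x_i^{t+1}\,|\,x^t]=f(\inprod{\theta^*}{x^t})$), bound the increments by $1/\alpha$ via \textbf{(LF)}, apply Azuma--Hoeffding, and union-bound over the $m$ coordinates with the stated choice of threshold. If anything, your treatment is more careful than the paper's terse version, which conditions only on $Y_t$ rather than the full filtration and does not address the randomness of $\mathcal{T}$; your fix (susceptibility indicators are $\mathcal{F}_t$-measurable, then condition on $|\mathcal{T}|=n$) is exactly what is needed to make that rigorous.
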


The proof of Lemma~\ref{lem:ub} relies crucially on Azuma-Hoeffding's
inequality, which allows us to handle correlated observations. This departs
from the usual assumptions made in sparse recovery settings, that the
measurements are independent from one another. We now show how to
use Theorem~\ref{thm:main} to recover the support of $\theta^*$, that is, to
solve the Network Inference problem.

\begin{corollary}
\label{cor:variable_selection}
Under the same assumptions as Theorem~\ref{thm:main}, let $\hat {\cal S}_\eta
\defeq \{ j \in \{1,\ldots, m\} : \hat{\theta}_j > \eta\}$ for $\eta > 0$. For
$0< \epsilon < \eta$, let ${\cal S}^*_{\eta + \epsilon} \defeq \{ i \in
\{1,\ldots,m\} :\theta_i^* > \eta +\epsilon \}$ be the set of all true `strong'
parents. Suppose the number of measurements verifies: $ n > \frac{9s\log
m}{\alpha\gamma^2\epsilon^2}$.  Then with probability $1-\frac{1}{m}$, ${\cal
S}^*_{\eta + \epsilon} \subseteq \hat {\cal S}_\eta \subseteq {\cal S}^*$. In
other words we recover all `strong' parents and no `false' parents.
\end{corollary}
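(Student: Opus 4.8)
The plan is to use Theorem~\ref{thm:main} to control the $\ell_2$ estimation error and then argue that a bound on $\|\hat\theta - \theta^*\|_2$ immediately yields a bound on each coordinate, which is what a thresholding argument needs. First I would invoke Theorem~\ref{thm:main}: under the stated assumptions, with the prescribed $\lambda$ we have
\begin{displaymath}
    \|\hat\theta - \theta^*\|_2 \leq \frac{6}{\gamma}\sqrt{\frac{s\log m}{\alpha n^{1-\delta}}}
\end{displaymath}
with probability at least $1 - e^{-n^\delta\log m}$. Since the $\ell_\infty$ norm is dominated by the $\ell_2$ norm, this gives $|\hat\theta_j - \theta^*_j| \leq \frac{6}{\gamma}\sqrt{\frac{s\log m}{\alpha n^{1-\delta}}}$ for \emph{every} coordinate $j$ simultaneously. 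The whole corollary then reduces to choosing $\delta$ and the sample size so that this uniform coordinate-wise error is strictly smaller than $\epsilon$, the gap between the threshold $\eta$ and the ``strong parent'' cutoff $\eta + \epsilon$.

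Second, I would solve the inequality $\frac{6}{\gamma}\sqrt{\frac{s\log m}{\alpha n^{1-\delta}}} < \epsilon$ for $n$. Taking $\delta$ small (or working at $\delta$ such that the probability term becomes $1 - 1/m$) and rearranging, this is equivalent to $n^{1-\delta} > \frac{36\, s\log m}{\alpha\gamma^2\epsilon^2}$; the hypothesis $n > \frac{9 s\log m}{\alpha\gamma^2\epsilon^2}$ in the statement is of exactly this form up to the constant and the choice of $\delta$, so the bookkeeping is to pin down the constant and confirm the exponent. Simultaneously I must check that the success probability $1 - e^{-n^\delta\log m}$ can be written as $1 - 1/m$; since $e^{-n^\delta \log m} = m^{-n^\delta} \le 1/m$ whenever $n^\delta \ge 1$, this holds for any $\delta$ once $n \ge 1$, so the probability claim is essentially free.

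Third, with the uniform coordinate bound $|\hat\theta_j - \theta^*_j| < \epsilon$ in hand, I would verify the two inclusions. For $\mathcal{S}^*_{\eta+\epsilon} \subseteq \hat{\mathcal{S}}_\eta$: if $\theta^*_j > \eta + \epsilon$, then $\hat\theta_j > \theta^*_j - \epsilon > \eta$, so $j \in \hat{\mathcal{S}}_\eta$. For $\hat{\mathcal{S}}_\eta \subseteq \mathcal{S}^*$: if $j \notin \mathcal{S}^* = \text{supp}(\theta^*)$, then $\theta^*_j = 0$, so $\hat\theta_j < \theta^*_j + \epsilon = \epsilon \le \eta$ (using $\epsilon < \eta$), hence $j \notin \hat{\mathcal{S}}_\eta$. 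Both chains are elementary once the coordinate-wise error bound is established.

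The main obstacle, and the only place requiring care, is the translation from the $\ell_2$ guarantee of Theorem~\ref{thm:main} to a per-coordinate guarantee and the matching of constants: the theorem's bound scales as $\sqrt{s}$ inside the $\ell_2$ norm, and I must make sure that bounding the $\ell_\infty$ error by the full $\ell_2$ error (rather than by $\|\hat\theta-\theta^*\|_2/\sqrt{\text{something}}$) does not lose a factor that changes the stated sample complexity. Because $\|\cdot\|_\infty \le \|\cdot\|_2$ loses nothing in the worst case, the $s\log m$ dependence is preserved; the residual work is purely constant-tracking (the $6$ versus $3$ versus $9$ factors) and fixing $\delta$ so that the threshold inequality holds and the failure probability collapses to $1/m$. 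I do not anticipate any genuine difficulty beyond this constant bookkeeping.
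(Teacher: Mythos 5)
Your proposal is correct and follows essentially the same route as the paper's own proof: set $\delta=0$ so the failure probability becomes $e^{-\log m}=1/m$, pass from the $\ell_2$ bound of Theorem~\ref{thm:main} to a per-coordinate bound (the paper writes this as $\|\hat\theta-\theta^*\|_2 \geq |\hat\theta_i - \theta_i^*|$, which is your $\ell_\infty \le \ell_2$ step), and conclude by the two threshold inclusions exactly as you do. The constant mismatch you flag ($n > 36\,s\log m/(\alpha\gamma^2\epsilon^2)$ needed to drive the error below $\epsilon$ with the theorem's factor of $6$, versus the stated $9$) is real, but it is present in the paper's own proof as well and affects only the constant, not the $s\log m$ scaling.
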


Assuming we know a lower bound $\alpha$ on $\Theta_{i,j}$,
Corollary~\ref{cor:variable_selection} can be applied to the Network Inference
problem in the following manner: pick $\epsilon = \frac{\eta}{2}$ and $\eta
= \frac{\alpha}{3}$, then $S_{\eta+\epsilon}^* = S$ provided that
$n=\Omega\left(\frac{s\log m}{\alpha^3\gamma^2}\right)$. That is, the support
of $\theta^*$ can be found by thresholding $\hat{\theta}$ to the level $\eta$.

\subsection{Approximate Sparsity}
\label{sec:relaxing_sparsity}

In practice, exact sparsity is rarely verified. For social networks in
particular, it is more realistic to assume that each node has few ``strong''
parents' and many ``weak'' parents. In other words, even if $\theta^*$ is not
exactly $s$-sparse, it can be well approximated by $s$-sparse vectors.

Rather than obtaining an impossibility result, we show that the bounds obtained
in Section~\ref{sec:main_theorem} degrade gracefully in this setting. Formally,
let
$
    \theta^*_{\lfloor s \rfloor}
    \in \argmin_{\|\theta\|_0 \leq s} \|\theta - \theta^*\|_1
    $
be the best $s$-approximation to $\theta^*$. Then we pay a cost proportional
to $\|\theta^* - \theta^*_{\lfloor s\rfloor}\|_1$ for recovering the weights of
non-exactly sparse vectors. This cost is simply the ``tail'' of
$\theta^*$: the sum of the $m-s$ smallest coordinates of $\theta^*$. We recover
the results of Section~\ref{sec:main_theorem} in the limit of exact sparsity.
These results are formalized in the following theorem, which is also a
consequence of Theorem 1 in \cite{Negahban:2009}.
\begin{theorem}
\label{thm:approx_sparse}
Suppose the {\bf(RE)} assumption holds for the Hessian $\nabla^2 f(\theta^*)$
and $\tau_{\mathcal{L}}(\theta^*) = \frac{\kappa_2\log m}{n}\|\theta^*\|_1$ on
the following set:
\begin{align}
\nonumber
{\cal C}' \defeq & \{X \in \mathbb{R}^p : \|X_{S^c}\|_1 \leq 3 \|X_S\|_1
+ 4 \|\theta^* - \theta^*_{\lfloor s \rfloor}\|_1 \} \\ \nonumber
& \cap \{ \|X\|_1 \leq 1 \}
\end{align}
If the number of measurements $n\geq \frac{64\kappa_2}{\gamma}s\log m$, then by
solving \eqref{eq:pre-mle} for $\lambda \defeq 2\sqrt{\frac{\log m}{\alpha n^{1
- \delta}}}$ we have:
\begin{align*}
    \|\hat \theta - \theta^* \|_2 \leq \frac{3}{\gamma} \sqrt{\frac{s\log
    m}{\alpha n^{1-\delta}}} + 4 \sqrt[4]{\frac{s\log m}{\gamma^4\alpha
    n^{1-\delta}}} \|\theta^* - \theta^*_{\lfloor s \rfloor}\|_1
\end{align*}
\end{theorem}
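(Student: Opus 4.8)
The plan is to apply Lemma~\ref{lem:negahban} --- the specialization of the Negahban et al. master bound to $\ell_1$ regularization --- exactly as in the proof of Theorem~\ref{thm:main}, but now in the regime where $\theta^*$ is only approximately sparse, so the tolerance function $\tau_{\mathcal{L}}$ no longer vanishes and the relevant descent cone is the enlarged set $\mathcal{C}'$ rather than $\mathcal{C}(S)$. Three ingredients feed the master bound: the curvature constant $\kappa_{\mathcal{L}}$, the tolerance $\tau_{\mathcal{L}}(\theta^*)$, and the regularization level $\lambda$. As in Theorem~\ref{thm:main}, the \textbf{(RE)} hypothesis on the Hessian $\nabla^2\mathcal{L}(\theta^*)$, together with the convexity and twice-differentiability of $\mathcal{L}$, supplies the restricted-strong-convexity lower bound \eqref{eq:rc} with $\kappa_{\mathcal{L}} = \gamma/2$; the theorem's hypothesis fixes $\tau_{\mathcal{L}}(\theta^*) = \frac{\kappa_2\log m}{n}\|\theta^*\|_1$; and Lemma~\ref{lem:ub} guarantees that the choice $\lambda = 2\sqrt{\frac{\log m}{\alpha n^{1-\delta}}}$ satisfies $\lambda \geq 2\|\nabla\mathcal{L}(\theta^*)\|_\infty$ with probability $1 - e^{-n^\delta\log m}$.

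The first genuine step is to locate the error vector $\Delta \defeq \hat\theta - \theta^*$. I would run the standard decomposability argument: combining the optimality of $\hat\theta$ for \eqref{eq:pre-mle} with the event $\lambda \geq 2\|\nabla\mathcal{L}(\theta^*)\|_\infty$ and the triangle inequality for $\|\cdot\|_1$ shows that $\Delta$ cannot concentrate off the support $S$, but --- because $\theta^*$ now has a nonzero tail --- the control on $\|\Delta_{S^c}\|_1$ picks up an additive term $4\|\theta^* - \theta^*_{\lfloor s \rfloor}\|_1$. This is precisely why the cone widens from $\mathcal{C}(S)$ to $\mathcal{C}'$, and it is on $\mathcal{C}'$ that the curvature lower bound \eqref{eq:rc} must be assumed, as in the statement. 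The measurement condition $n \geq \frac{64\kappa_2}{\gamma}s\log m$ enters here: it guarantees that the tolerance term $\tau_{\mathcal{L}}^2(\theta^*)$ is small enough relative to $\kappa_{\mathcal{L}}\|\Delta\|_2^2$ that the restricted-strong-convexity inequality is not vacuous, so that the master bound of Lemma~\ref{lem:negahban} can legitimately be applied.

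With $\Delta \in \mathcal{C}'$ and the three ingredients in hand, I would substitute into the master bound, which in the approximately-sparse regime takes the form $\|\Delta\|_2^2 \lesssim \frac{\lambda^2 s}{\kappa_{\mathcal{L}}^2} + \frac{\lambda}{\kappa_{\mathcal{L}}}\tau_{\mathcal{L}}^2(\theta^*) + \frac{\lambda}{\kappa_{\mathcal{L}}}\|\theta^* - \theta^*_{\lfloor s \rfloor}\|_1$ up to absolute constants. Plugging in $\kappa_{\mathcal{L}} = \gamma/2$, the stated $\tau_{\mathcal{L}}$, and $\lambda$, the first two summands --- once the measurement condition absorbs the $\tau_{\mathcal{L}}^2$ contribution --- collapse to a quantity of order $\frac{s\log m}{\gamma^2\alpha n^{1-\delta}}$, while the last summand contributes a term of order $\frac{\lambda}{\gamma}\|\theta^* - \theta^*_{\lfloor s \rfloor}\|_1$. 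Taking square roots and applying subadditivity $\sqrt{a+b}\leq\sqrt{a}+\sqrt{b}$ separates the bound into the two advertised pieces: the first, $\frac{3}{\gamma}\sqrt{\frac{s\log m}{\alpha n^{1-\delta}}}$, recovers the Theorem~\ref{thm:main} rate in the exactly-sparse limit $\|\theta^* - \theta^*_{\lfloor s \rfloor}\|_1 \to 0$; the second is the graceful degradation proportional to the tail.

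The main obstacle is the cone-and-curvature bookkeeping: correctly deriving the enlarged containment $\Delta\in\mathcal{C}'$ with the exact factors ($3\|\Delta_S\|_1 + 4\|\theta^* - \theta^*_{\lfloor s \rfloor}\|_1$) and then propagating the tail term through the master bound so that the final coefficient lands on $4\sqrt[4]{\frac{s\log m}{\gamma^4\alpha n^{1-\delta}}}$ with the tail appearing \emph{linearly}. Obtaining the fourth-root dependence and the correct powers of $\gamma$ and $s$ --- rather than a crude $\sqrt{\|\theta^* - \theta^*_{\lfloor s \rfloor}\|_1}$ --- requires care in how the square-root splitting is carried out and in the use of the measurement lower bound to trade curvature against tolerance. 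The rest is the same high-probability argument as in Theorem~\ref{thm:main}, inherited verbatim from Lemma~\ref{lem:ub}.
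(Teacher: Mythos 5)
Your proposal follows exactly the paper's route: the paper establishes Theorem~\ref{thm:approx_sparse} by direct appeal to Theorem~1 of \citet{Negahban:2009} (i.e.\ the general form of Lemma~\ref{lem:negahban} with nonzero tolerance $\tau_{\mathcal{L}}$ and the enlarged cone $\mathcal{C}'$), using the same ingredients you identify --- $\kappa_{\mathcal{L}} = \gamma/2$ from the {\bf(RE)} condition, $\lambda$ set via Lemma~\ref{lem:ub}, and the measurement condition $n \geq \frac{64\kappa_2}{\gamma}s\log m$ serving to absorb the tolerance into the curvature. The paper in fact offers no detail beyond this citation, so your sketch (the widened-cone decomposability argument and the square-root splitting of the master bound) is, if anything, more explicit than the paper's own treatment.
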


As in Corollary~\ref{cor:variable_selection}, an edge recovery guarantee can be
derived from  Theorem~\ref{thm:approx_sparse} in the case of approximate
sparsity.

\subsection{Restricted Eigenvalue Condition}
\label{sec:re}

There exists a large class of sufficient conditions under which sparse recovery
is achievable in the context of regularized estimation~\cite{vandegeer:2009}.
The restricted eigenvalue condition, introduced in \cite{bickel:2009}, is one
of the weakest such assumption. It can be interpreted as a restricted form of
non-degeneracy. Since we apply it to the Hessian of the log-likelihood function
$\nabla^2 \mathcal{L}(\theta)$, it essentially reduces to a form of restricted
strong convexity, that Lemma~\ref{lem:negahban} ultimately relies on.

Observe that the Hessian of $\mathcal{L}$ can be seen as a re-weighted
\emph{Gram matrix} of the observations:
\begin{multline*}
    \nabla^2\mathcal{L}(\theta^*)
    = \frac{1}{|\mathcal{T}|}\sum_{t\in\mathcal{T}}x^t(x^t)^T
    \bigg[x_i^{t+1}\frac{f''f-f'^2}{f^2}(\inprod{\theta^*}{x^t})\\
    -(1-x_i^{t+1})\frac{f''(1-f) + f'^2}{(1-f)^2}(\inprod{\theta^*}{x^t})\bigg]
\end{multline*}
If $f$ and $(1-f)$ are $c$-strictly log-convex for $c>0$,
then $ \min\left((\log f)'', (\log (1-f))'' \right) \geq c $. This implies that
the $(S, \gamma)$-({\bf RE}) condition in Theorem~\ref{thm:main} and
Theorem~\ref{thm:approx_sparse} reduces to a condition on the \emph{Gram
matrix} of the observations $X^T
X = \frac{1}{|\mathcal{T}|}\sum_{t\in\mathcal{T}}x^t(x^t)^T$ for $\gamma'
\defeq \gamma\cdot c$.

\paragraph{(RE) with high probability}

The Generalized Linear Cascade model yields a probability distribution over the
observed sets of infected nodes $(x^t)_{t\in\mathcal{T}}$. It is then natural
to ask whether the restricted eigenvalue condition is likely to occur under
this probabilistic model. Several recent papers show that large classes of
correlated designs obey the restricted eigenvalue property with high
probability \cite{raskutti:10, rudelson:13}.

The {\bf(RE)}-condition has the following concentration property: if it holds
for the expected Hessian matrix $\E[\nabla^2\mathcal{L}(\theta^*)]$, then it
holds for the finite sample Hessian matrix $\nabla^2\mathcal{L}(\theta^*)$ with
high probability.

Therefore, under an assumption which only involves the probabilistic model and
not the actual observations, we can obtain the same conclusion as in
Theorem~\ref{thm:main}:

\begin{proposition}
    \label{prop:fi}
    Suppose $\E[\nabla^2\mathcal{L}(\theta^*)]$ verifies the $(S,\gamma)$-{\bf
    (RE)} condition and assume {\bf (LF)} and {\bf (LF2)}. For $\delta> 0$, if
    $n^{1-\delta}\geq \frac{1}{28\gamma\alpha}s^2\log m $, then
    $\nabla^2\mathcal{L}(\theta^*)$ verifies the $(S,\frac{\gamma}{2})$-(RE)
    condition, w.p $\geq 1-e^{-n^\delta\log m}$.
\end{proposition}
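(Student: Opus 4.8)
The plan is to bound the random deviation matrix $\Delta \defeq \nabla^2\mathcal{L}(\theta^*) - \E[\nabla^2\mathcal{L}(\theta^*)]$ in the entrywise max-norm $\|\Delta\|_{\max}\defeq\max_{i,j}|\Delta_{ij}|$ and then to transfer this bound to the quadratic form over the cone $\mathcal{C}(S)$ by a purely geometric argument. First I would use that every $X\in\mathcal{C}(S)$ satisfies $\|X\|_1 = \|X_S\|_1 + \|X_{S^c}\|_1 \leq 4\|X_S\|_1 \leq 4\sqrt{s}\,\|X\|_2$, so that $|X^{T}\Delta X| \leq \|\Delta\|_{\max}\,\|X\|_1^2 \leq 16\,s\,\|\Delta\|_{\max}\,\|X\|_2^2$. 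Hence on the event $\{\|\Delta\|_{\max}\leq \gamma/(32 s)\}$ we obtain, for every $X\in\mathcal{C}(S)$, $X^{T}\nabla^2\mathcal{L}(\theta^*)X \geq X^{T}\E[\nabla^2\mathcal{L}(\theta^*)]X - \tfrac{\gamma}{2}\|X\|_2^2 \geq \tfrac{\gamma}{2}\|X\|_2^2$, which is precisely the $(S,\gamma/2)$-{\bf(RE)} condition. It therefore suffices to control each entry of $\Delta$ to accuracy $\gamma/(32 s)$.

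For the entrywise control I would write the Hessian as the re-weighted Gram matrix of Section~\ref{sec:re}: its $(i,j)$ entry equals $\frac{1}{n}\sum_{t\in\mathcal{T}} w_t\, x_i^t x_j^t$, where the scalar weight is $w_t = x_i^{t+1}(\log f)''(z_t) + (1-x_i^{t+1})(\log(1-f))''(z_t)$ and $z_t \defeq \inprod{\theta^*}{x^t}$. Assumption {\bf(LF2)} is exactly what guarantees $|w_t|\leq 1/\alpha$, and hence $|w_t\, x_i^t x_j^t|\leq 1/\alpha$: the summands are uniformly bounded. I would then apply Azuma--Hoeffding to the centered average $\Delta_{ij}=\frac{1}{n}\sum_{t}\big(w_t x_i^t x_j^t - \E[w_t x_i^t x_j^t]\big)$ --- as in Lemma~\ref{lem:ub}, it is the martingale form of the inequality that lets us absorb the temporal correlation between successive measurements. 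A union bound over the at most $m^2$ entries, with deviation threshold $\gamma/(32 s)$ and boundedness constant $1/\alpha$ from {\bf(LF2)}, matched against the target failure probability $e^{-n^{\delta}\log m}$, then yields the stated sample requirement of order $s^2\log m/(\gamma\alpha)$; the factor $s^2$ is forced by squaring the order-$\gamma/s$ accuracy demanded of each entry.

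The main obstacle is that, unlike the gradient at $\theta^*$, the Hessian is not conditionally centered. Let $\mathcal{F}_t$ be the $\sigma$-algebra generated by the states up to time $t$. In Lemma~\ref{lem:ub} the gradient summands are residuals with $\E[\,\cdot\mid\mathcal{F}_t] = 0$, so they form a genuine martingale difference sequence and a single Azuma bound applies cleanly. Here, by contrast, $\E[w_t x_i^t x_j^t \mid \mathcal{F}_t] = g(z_t)\,x_i^t x_j^t$ with $g \defeq f\,(\log f)'' + (1-f)\,(\log(1-f))''$, which is $\mathcal{F}_t$-measurable but does not equal its own unconditional mean. I would therefore split $\Delta_{ij}$ into a martingale part $\frac{1}{n}\sum_t\big(w_t - g(z_t)\big)x_i^t x_j^t$, handled directly by Azuma--Hoeffding since it is a bounded martingale difference sequence, and a predictable part $\frac{1}{n}\sum_t\big(g(z_t)x_i^t x_j^t - \E[g(z_t)x_i^t x_j^t]\big)$. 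This second term carries the entire temporal dependence of the Markov trajectory and is the crux of the argument: it must be controlled either through a Doob martingale over the trajectory whose increments are bounded using {\bf(LF2)} together with the limited propagation of a cascade (each node becomes contagious only once), or by exploiting independence across distinct cascades. This extra term, which has no analogue in the gradient analysis, is exactly what forces the Hessian to require a factor $s$ more measurements, namely $\Omega(s^2\log m)$ rather than $\Omega(s\log m)$.
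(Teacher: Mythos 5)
Your first two paragraphs are, step for step, the paper's own proof. The paper bounds $\|H-\E[H]\|_\infty$ entrywise (where $H\defeq\nabla^2\mathcal{L}(\theta^*)$) by applying Azuma's inequality to each entry with increments bounded by {\bf(LF)}/{\bf(LF2)}, takes a union bound over the $m^2$ entries, and then transfers to the quadratic form on the cone exactly as you do, via $\|X\|_1\leq 4\|X_S\|_1\leq 4\sqrt{s}\,\|X\|_2$, obtaining $X^THX\geq X^T\E[H]X\,(1-32s\lambda/\gamma)$; the $s^2\log m$ sample requirement arises, as you say, from demanding entrywise accuracy of order $\gamma/s$.

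Your third paragraph, however, puts its finger on something the paper does not address: the paper writes $|Y_t-\E[Y_t]|\leq 3/\alpha$ and then says ``applying Azuma's inequality as in the proof of Lemma~\ref{lem:ub}'', as if the martingale structure carried over. You are right that it does not: in Lemma~\ref{lem:ub} the gradient summands satisfy $\E[Y_t\,|\,\mathcal{F}_t]=0$ because $\E[x_i^{t+1}\,|\,x^t]=f(\inprod{\theta^*}{x^t})$, so the partial sums are a genuine martingale, whereas the Hessian summands have $\E[Y_t\,|\,\mathcal{F}_t]=g(z_t)\,x_i^tx_j^t$, a nondegenerate random variable (e.g.\ $g(z)=-1/(e^z-1)$ for the independent cascade model), so $\sum_t(Y_t-\E[Y_t])$ is \emph{not} a martingale and Azuma does not apply verbatim. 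So you have identified a genuine gap --- but it is a gap in the paper's proof; your proposal does not close it either. You split off the predictable part $\frac{1}{n}\sum_t\big(g(z_t)x_i^tx_j^t-\E[g(z_t)x_i^tx_j^t]\big)$ and then only name two candidate strategies (a Doob martingale, or independence across cascades) without carrying either out. Beware that the naive Doob-martingale route fails: conditioning on one additional step of the trajectory can shift the conditional expectations of \emph{all} future summands, so its increments are a priori of order $n/\alpha$ rather than $O(1/\alpha)$; controlling this term genuinely requires a mixing or cascade-structure argument that neither you nor the paper supplies. Finally, your closing sentence --- that this predictable term ``is exactly what forces'' the $s^2$ sample complexity --- is wrong and contradicts your own second paragraph: the $s^2$ factor comes solely from squaring the per-entry accuracy $\gamma/(32s)$ in the exponential tail bound, and the paper, which ignores the predictable term entirely, arrives at the same $s^2$.
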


Observe that the number of measurements required in Proposition~\ref{prop:fi}
is now quadratic in $s$. If we only keep the first measurement from each
cascade, which are independent, we can apply Theorem 1.8 from
\cite{rudelson:13}, lowering the number of required cascades to $s\log m \log^3(
s\log m)$.

If $f$ and $(1-f)$ are strictly log-convex, then the previous observations show
that the quantity $\E[\nabla^2\mathcal{L}(\theta^*)]$ in
Proposition~\ref{prop:fi} can be replaced by the expected \emph{Gram matrix}:
$A \equiv \mathbb{E}[X^T X]$. This matrix $A$ has a natural interpretation: the
entry $a_{i,j}$ is the probability that node $i$ and node $j$ are infected at
the same time during a cascade. In particular, the diagonal term $a_{i,i}$ is
simply the probability that node $i$ is infected during a cascade.

\section{A Lower Bound}
\label{sec:lowerbound}
In \cite{Netrapalli:2012}, the authors explicitate a lower bound of
$\Omega(s\log\frac{m}{s})$ on the number of cascades necessary to achieve good
support recovery with constant probability under a \emph{correlation decay}
assumption.  In this section, we will consider the stable sparse recovery
setting of Section~\ref{sec:relaxing_sparsity}.  Our goal is to obtain an
information-theoretic lower bound on the number of measurements necessary to
approximately recover the parameter $\theta^*$ of a cascade model from observed
cascades. Similar lower bounds were obtained for sparse \emph{linear} inverse
problems in \cite{pw11, pw12, bipw11}.

\begin{theorem}
    \label{thm:lb}
    Let us consider a cascade model of the form \eqref{eq:glm} and a recovery
    algorithm $\mathcal{A}$ which takes as input $n$ random cascade
    measurements and outputs $\hat{\theta}$ such that with probability
    $\delta>\frac{1}{2}$ (over the measurements):
    \begin{equation}
        \label{eq:lb}
        \|\hat{\theta}-\theta^*\|_2\leq
        C\min_{\|\theta\|_0\leq s}\|\theta-\theta^*\|_2
    \end{equation}
    where $\theta^*$ is the true parameter of the cascade model. Then $n
    = \Omega(s\log\frac{m}{s}/\log C)$.
\end{theorem}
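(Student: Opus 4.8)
The plan is to prove an information-theoretic lower bound by reducing the recovery guarantee~\eqref{eq:lb} to a hypothesis-identification problem and applying Fano's inequality. Since the theorem only requires the existence of \emph{one} hard model of the form~\eqref{eq:glm}, I am free to choose $f$ and the source distribution to make the measurements analytically tractable. Concretely, I would pick a link $f$ that is affine in the relevant range (e.g.\ the voter-type $f:z\mapsto z$) together with a source distribution for which the contagious indicator vectors $x^t$ are i.i.d.\ across the $n$ measurements --- for instance by restarting an independent cascade for each measurement and keeping only its first step. The observations $Y_t \defeq X^{t+1}_j$ are then conditionally independent Bernoulli variables of parameter $f(\inprod{\theta^*}{x^t})$ given the (known) design $x^t$. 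This removes the correlation that complicated the upper-bound analysis and lets the mutual-information chain rule decompose cleanly.

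Next I would build the hard family of parameters. In the exactly sparse case one takes $\theta^*$ supported on a uniformly random $s$-subset of $\{1,\ldots,m\}$; a standard constant-weight packing argument gives $2^{\Omega(s\log(m/s))}$ supports that are pairwise well separated in $\ell_2$, so that recovering $\hat\theta$ to within $\ell_2$ error below the separation identifies the support, and Fano's inequality forces the information $I(\theta^*;Y_{1:n})$ to be $\Omega(s\log(m/s))$. To account for the approximation factor $C$ --- the whole point of the $1/\log C$ term --- I would instead use a multi-scale (augmented-indexing) construction: superpose $L \defeq \Theta\!\big(\log(m/s)/\log C\big)$ independent sub-instances at geometrically decreasing magnitudes with ratio $\Theta(C)$, each carrying $\Theta(s)$ bits of support information. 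The geometric ratio is chosen precisely so that, after conditioning on the coarser scales (the ``augmented'' side information), the current scale dominates the residual tail by more than a factor $C$, forcing a $C$-approximate reconstruction satisfying~\eqref{eq:lb} to resolve that scale. Summing over the $L$ scales yields $\Omega(s\log(m/s)/\log C)$ bits that any successful algorithm must extract.

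The remaining ingredient is an upper bound on the information each measurement conveys. Because every $Y_t$ is a single Bernoulli outcome, $H(Y_t)\le 1$, and by the chain rule together with conditional independence of the design, $I(\theta^*;Y_{1:n}) \le \sum_{t=1}^n H(Y_t) \le n$, so each measurement is worth $O(1)$ bits. A sharper bound via the KL divergence between Bernoulli parameters, using the regularity of $f$, would only affect constants. Combining the Fano lower bound on the information that must be extracted with this $O(1)$-per-measurement ceiling gives $n = \Omega\!\big(s\log(m/s)/\log C\big)$, as claimed.

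The main obstacle is the multi-scale step: proving rigorously that a $C$-approximate guarantee lets one decode \emph{every} one of the $L$ scales. This is where the factor $1/\log C$ is actually earned, and it requires the careful augmented-indexing reduction in which the residual tail at each scale is controlled so that the allowed error $C\cdot\min_{\|\theta\|_0\le s}\|\theta-\theta^*\|_2$ cannot swamp the signal being decoded. A secondary, more routine obstacle is verifying that the planted $\theta^*$ --- whose prescribed dynamic range is $\Theta(C)^{L}=\mathrm{poly}(m/s)$ --- remains a feasible parameter vector for the chosen cascade model (positive, and within any range imposed by~\eqref{eq:glm}), so that the hard instance is genuinely realizable as a generalized linear cascade.
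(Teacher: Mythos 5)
The crux of your plan --- the multi-scale augmented-indexing construction, which you correctly identify as ``where the factor $1/\log C$ is actually earned'' --- does not survive the non-linearity of the measurement model, and this is a genuine gap rather than a fixable obstacle. In the linear sparse-recovery setting where that reduction originates, Bob decodes scale $i$ by \emph{subtracting} the contribution of the known coarser scales from the sketch ($y - A\theta_{\mathrm{known}}$), rescaling, and re-running $\mathcal{A}$ on the result; that is precisely what makes every scale, and not just the top one, decodable. Here a measurement is a single bit $y_t\sim\mathrm{Bernoulli}\big(f(\langle x^t,\theta^*\rangle)\big)$. No post-processing of a coin flip can remove a known additive contribution from its parameter (for $q>0$, $\mathrm{Bern}(p+q)$ and $\mathrm{Bern}(p)$ are at total-variation distance $q$, so one cannot be simulated from a single sample of the other), and no rescaling can promote a lower scale to the top. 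Consequently the only decoder available is a single run of $\mathcal{A}$ on the original measurements, whose allowed error $C\min_{\|\theta\|_0\le s}\|\theta-\theta^*\|_2$ is of the order of the norm of \emph{all} scales below the top one; only the top scale is ever resolved, the per-scale conditional-information accounting collapses, and the claimed total of $\Omega(s\log(m/s)/\log C)$ bits is never established. (A secondary scope issue: fixing $f$ to be the linear voter link proves only ``there exists a hard model,'' whereas the theorem quantifies over all models of the form \eqref{eq:glm}, and the paper's argument keeps $f$ generic.)

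The paper sidesteps this by following Price--Woodruff: a \emph{single-scale} hard distribution $\theta=t+w$, with $t$ a random $\pm 1$ vector on a support drawn from a well-separated family and $w\sim\mathcal{N}(0,\alpha\frac{s}{m}I_m)$ with $\alpha=\Omega(1/C)$, i.e.\ Gaussian noise injected into the \emph{parameter}, before the link function. A Fano-type argument shows Bob must extract $\Omega(s\log\frac{m}{s})$ bits about the support (the noise is calibrated so that a $C$-approximate output still identifies it), while the Shannon--Hartley theorem plus data processing (the link $f$ and the Bernoulli sampling can only degrade the Gaussian channel) caps the information at $O(n\log C)$. The $\log C$ thus lives in the per-measurement channel capacity, not in the construction --- which is exactly why that route tolerates non-linear binary observations and yours does not. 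Ironically, your two sound ingredients already suffice in this binary-observation setting: on exactly sparse inputs the right-hand side of \eqref{eq:lb} vanishes, so the algorithm must recover $\theta^*$ exactly; Fano over $\binom{m}{s}$ supports then requires $\Omega(s\log\frac{m}{s})$ bits, and your ceiling $H(Y_t)\le 1$ gives $n=\Omega(s\log\frac{m}{s})$, which implies the stated bound whenever $C$ is bounded away from $1$ (the regime the $\log C$ factor is meant for). So the multi-scale machinery in your proposal is at once the broken step and an unnecessary one.
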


This theorem should be contrasted with Theorem~\ref{thm:approx_sparse}: up to
an additive $s\log s$ factor, the number of measurements required by our
algorithm is tight.  The proof of Theorem~\ref{thm:lb} follows an approach
similar to \cite{pw12}.  We present a sketch of the proof in the Appendix and
refer the reader to their paper for more details.

\section{Experiments}
\label{sec:experiments}


\begin{figure*}[t]
\centering
\begin{tabular}{l l l}
    \hspace{-0.5em}\includegraphics[scale=.28]{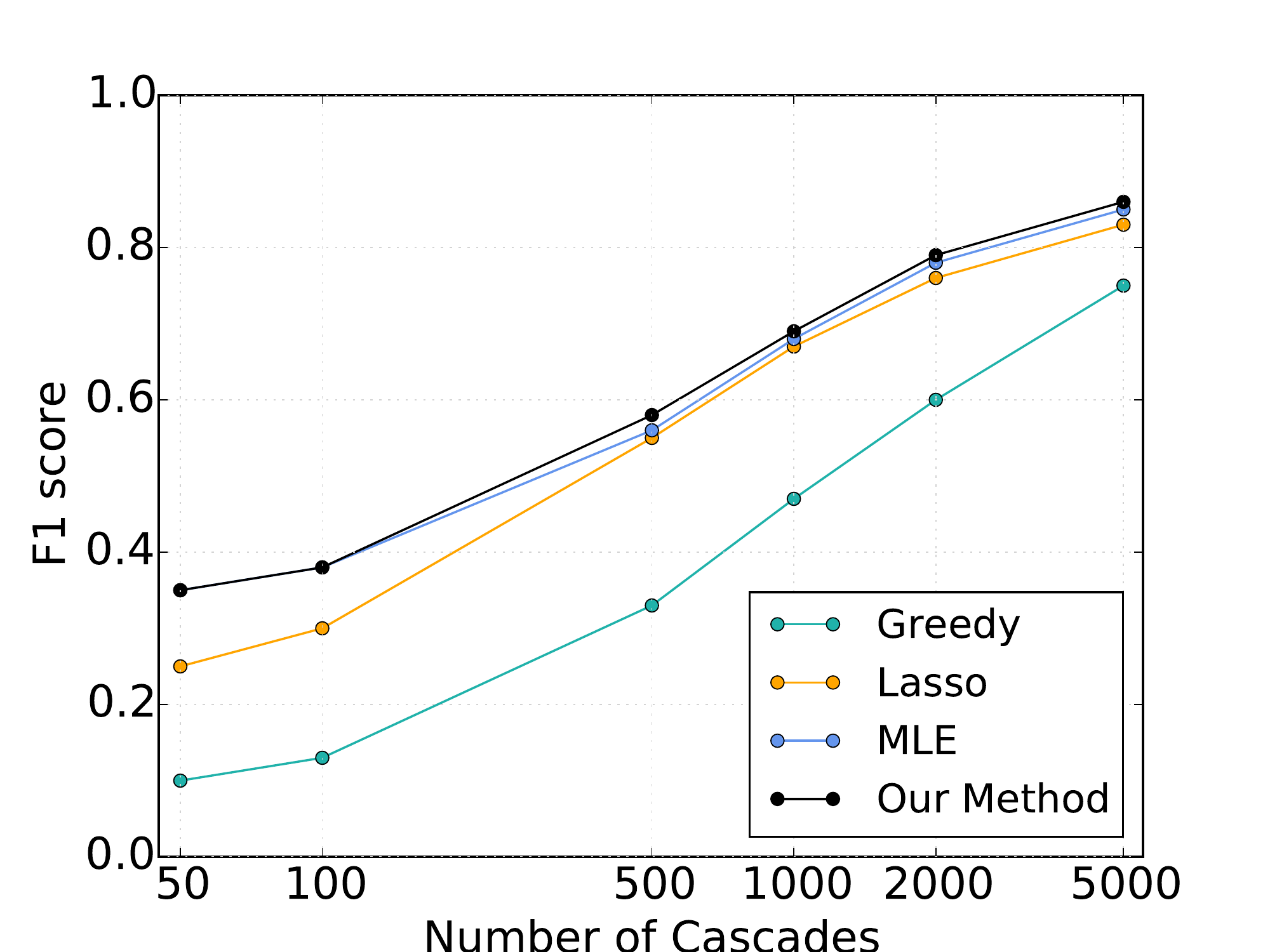}
& \hspace{-1em}\includegraphics[scale=.28]{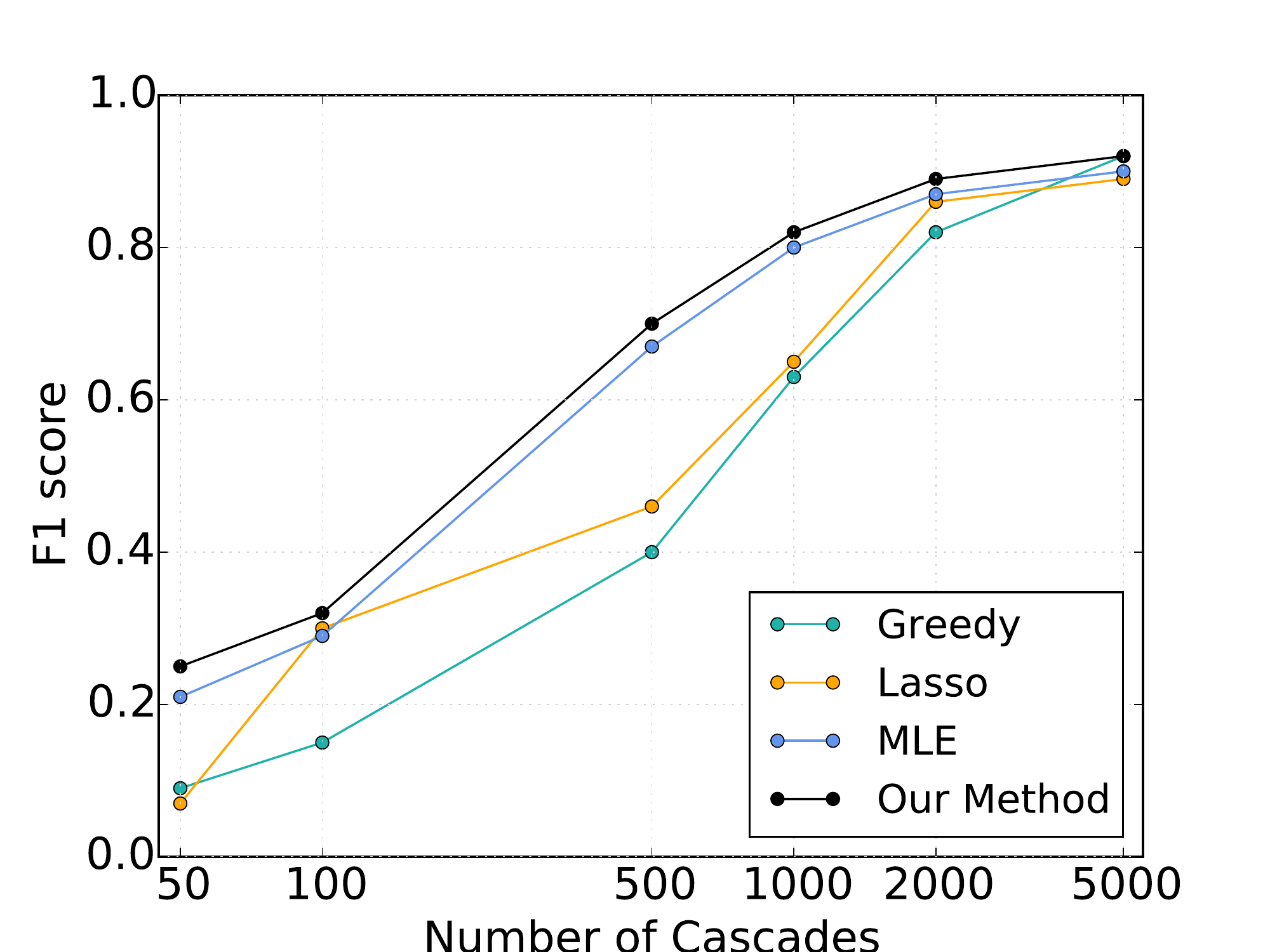}
& \hspace{-3em}\includegraphics[scale=.28]{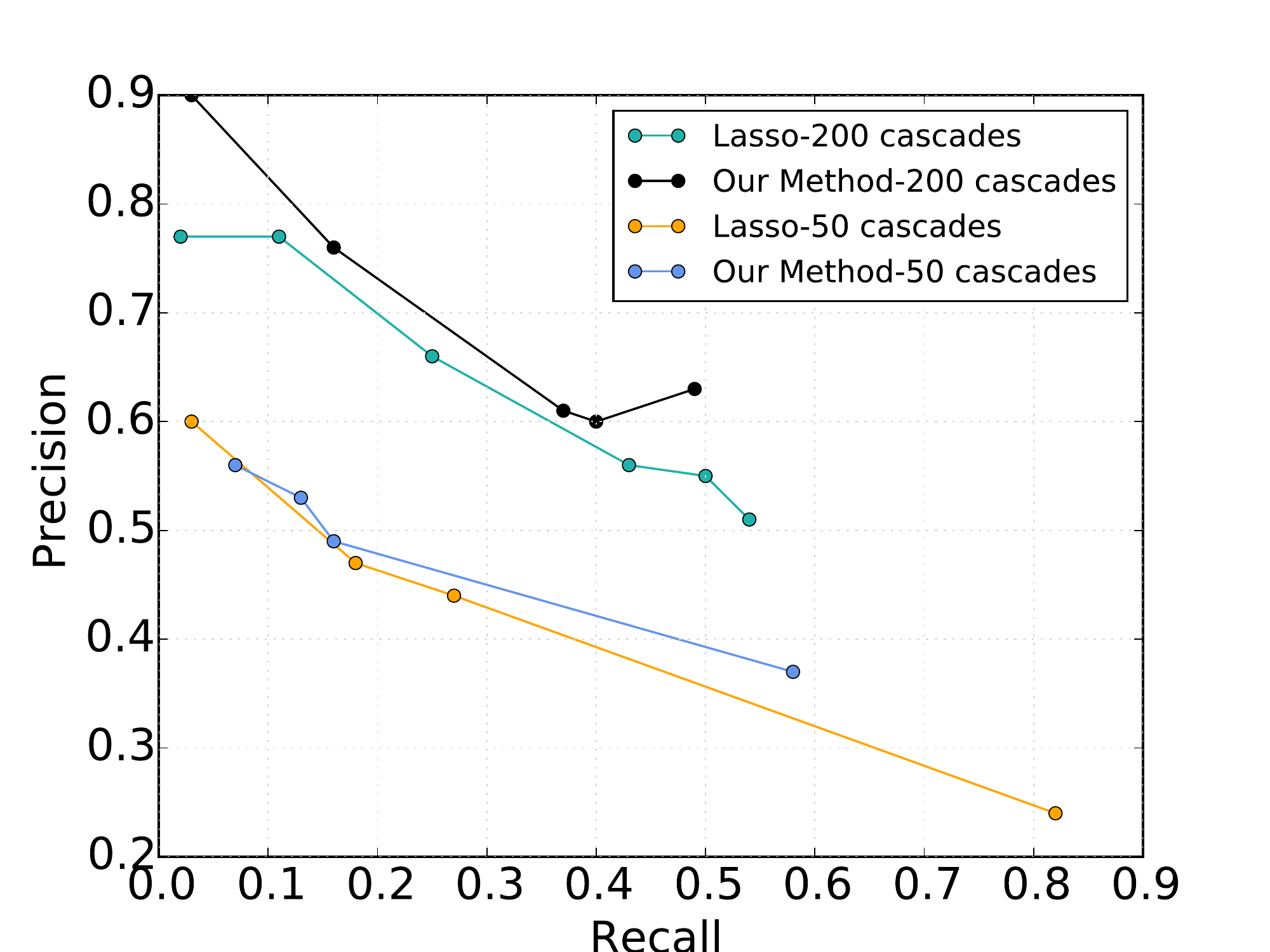}  \\

\hspace{-0.5em} (a) Barabasi-Albert (F$1$ \emph{vs.} $n$)
&\hspace{-1em} (b) Watts-Strogatz (F$1$ \emph{vs.} $n$)
&\hspace{-3em} (c) Holme-Kim (Prec-Recall) \\
 \hspace{-0.5em}\includegraphics[scale=.28]{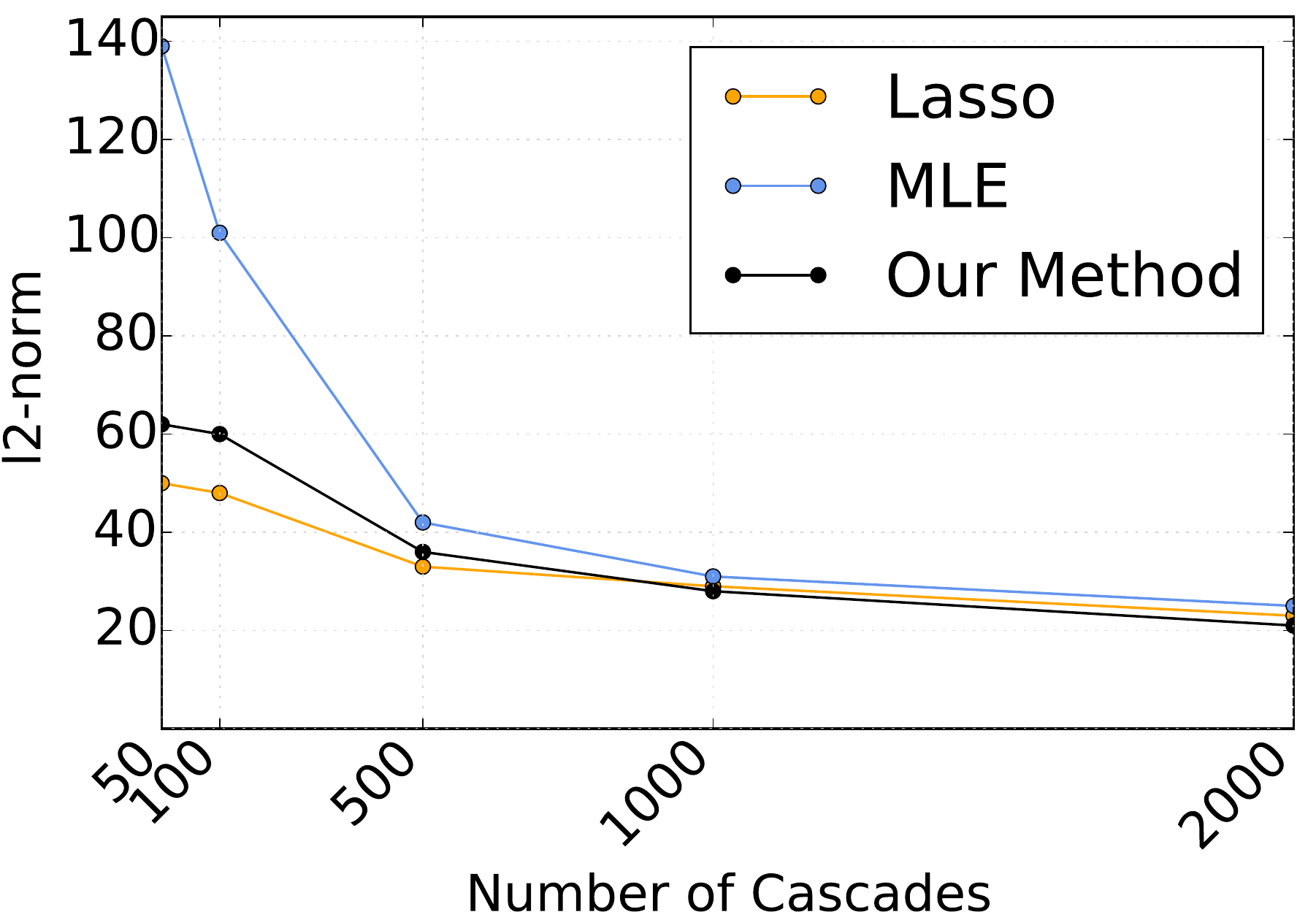}
& \hspace{-1em}\includegraphics[scale=.28]{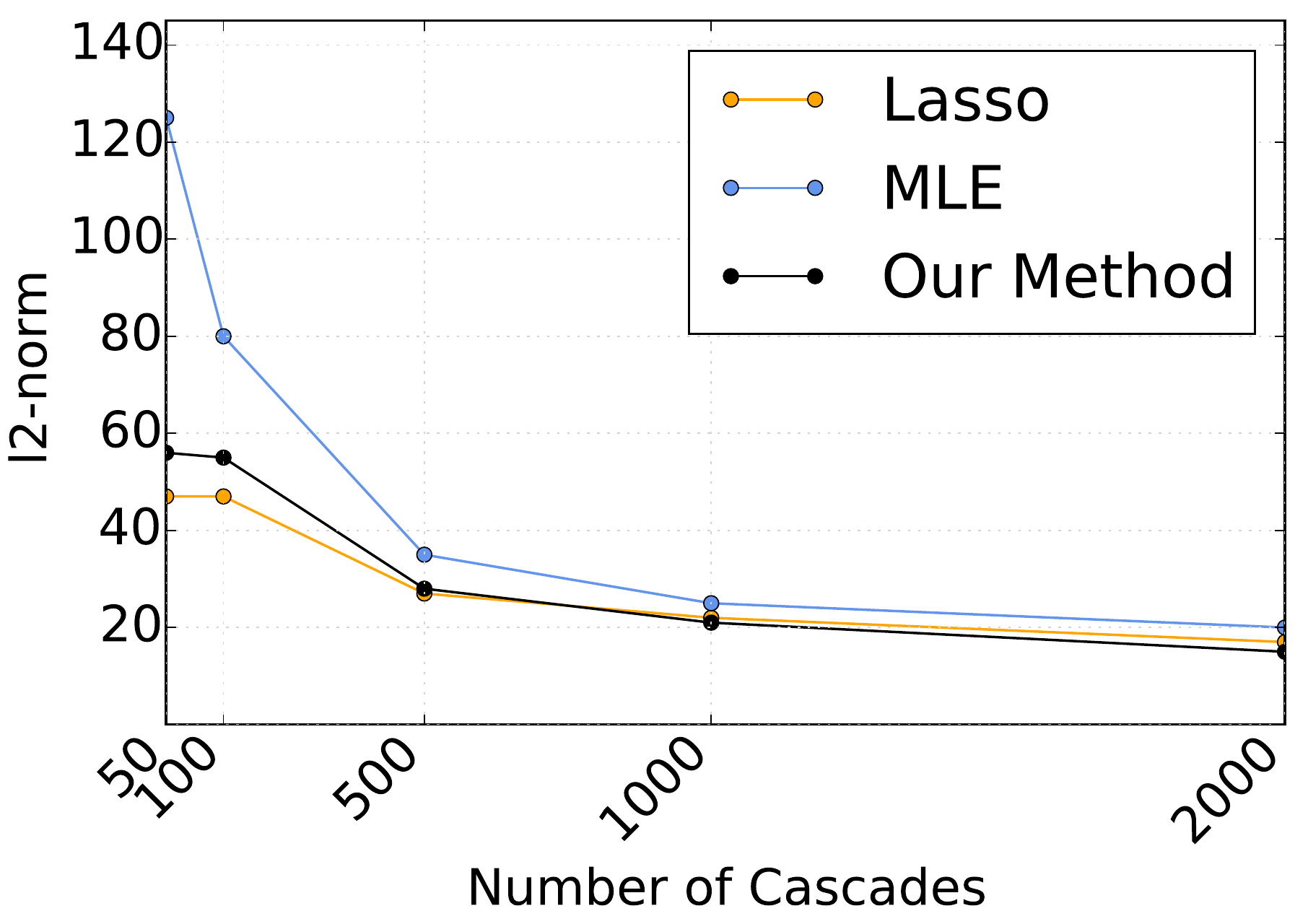}
& \hspace{-3em}\includegraphics[scale=.28]{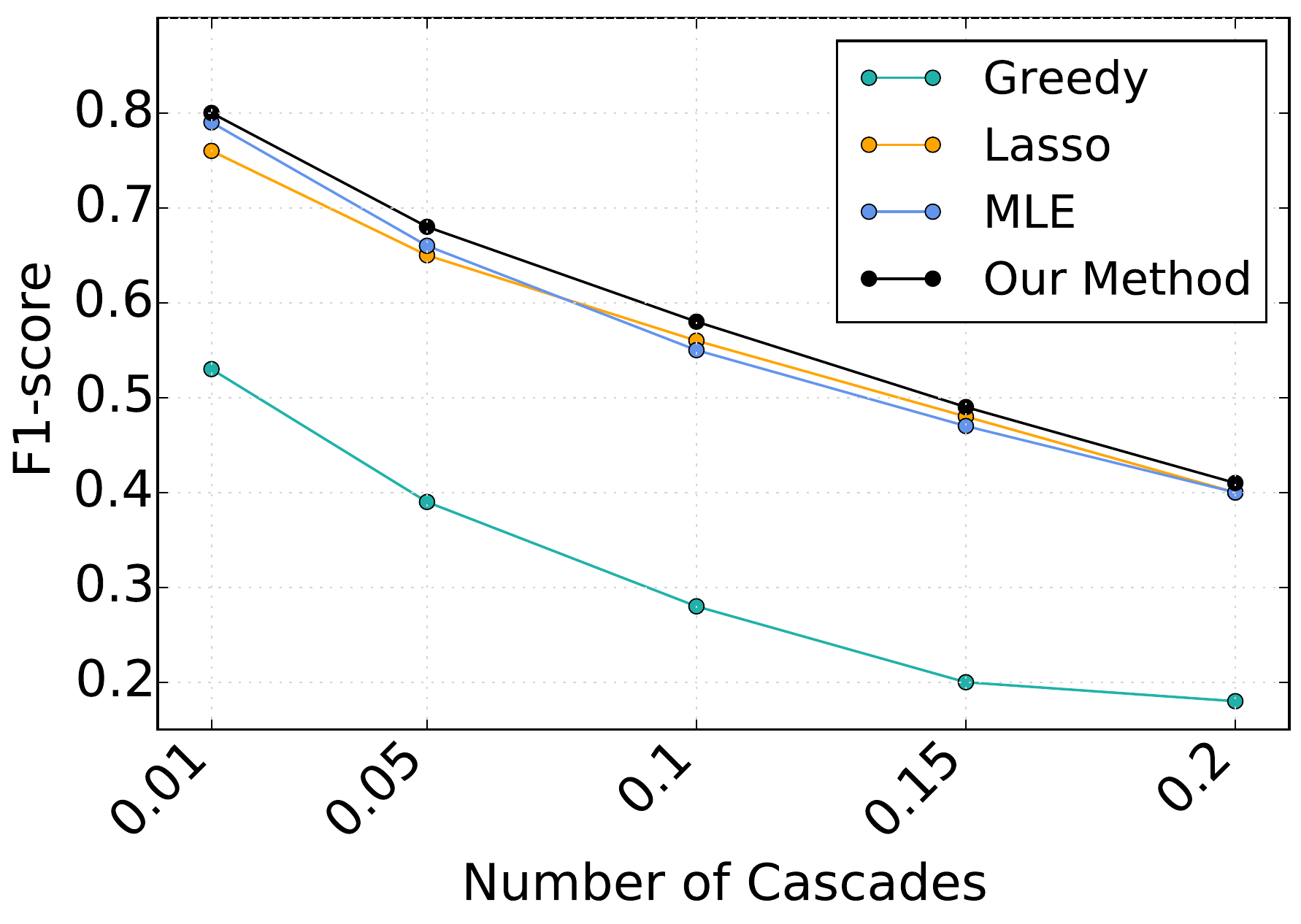} \\

(d) Sparse Kronecker ($\ell_2$-norm \emph{vs.} $n$) & (e) Non-sparse Kronecker
 ($\ell_2$-norm \emph{vs.} $n$) & (f) Watts-Strogatz (F$1$ \emph{vs.}
 $p_{\text{init}}$)
\end{tabular}
\caption{Figures (a) and (b) report the F$1$-score in $\log$ scale for
  2 graphs as a function of the number of cascades $n$: (a) Barabasi-Albert
  graph, $300$ nodes, $16200$ edges. (b) Watts-Strogatz graph, $300$ nodes,
  $4500$ edges. Figure (c) plots the Precision-Recall curve for various values
  of $\lambda$ for a Holme-Kim graph ($200$ nodes, $9772$ edges). Figures (d)
  and (e) report the $\ell_2$-norm $\|\hat \Theta - \Theta\|_2$ for a Kronecker
graph which is: (d) exactly sparse (e) non-exactly sparse, as a function of the
number of cascades $n$. Figure (f) plots the F$1$-score for the Watts-Strogatz
graph as a function of $p_{init}$.}~\label{fig:four_figs}
\vspace{-2em}
\end{figure*}

In this section, we validate empirically the results and assumptions of
Section~\ref{sec:results} for varying levels of sparsity and different
initializations of parameters ($n$, $m$, $\lambda$, $p_{\text{init}}$), where
$p_{\text{init}}$ is the initial probability of a node being a source node. We
compare our algorithm to two different state-of-the-art algorithms:
\textsc{greedy} and \textsc{mle} from~\cite{Netrapalli:2012}. As an extra
benchmark, we also introduce a new algorithm \textsc{lasso}, which approximates
our \textsc{sparse mle} algorithm.

\paragraph{Experimental setup}
We evaluate the performance of the algorithms on synthetic graphs, chosen for
their similarity to real social networks. We therefore consider a Watts-Strogatz
graph ($300$ nodes, $4500$ edges)~\cite{watts:1998}, a Barabasi-Albert graph
($300$ nodes, $16200$ edges)~\cite{barabasi:2001}, a Holme-Kim power law graph
($200$ nodes, $9772$ edges)~\cite{Holme:2002}, and the recently introduced
Kronecker graph ($256$ nodes, $10000$ edges)~\cite{Leskovec:2010}. Undirected
graphs are converted to directed graphs by doubling the edges.

For every reported data point, we sample edge weights and generate $n$ cascades
from the (IC) model for $n \in \{100, 500, 1000, 2000, 5000\}$. We compare for
each algorithm the estimated graph $\hat {\cal G}$ with ${\cal G}$. The initial
probability of a node being a source is fixed to $0.05$, i.e.\ an average of $15$
nodes source nodes per cascades for all experiments, except for
Figure~\label{fig:four_figs} (f). All edge weights are chosen uniformly in the
interval $[0.2, 0.7]$, except when testing for approximately sparse graphs (see
paragraph on robustness). Adjusting for the variance of our experiments, all
data points are reported with at most a $\pm 1$ error margin. The parameter
$\lambda$ is chosen to be of the order ${\cal O}(\sqrt{\log m / (\alpha n)})$.
We report our results as a function of the number of \emph{cascades} and not the
number of \emph{measurements}: in practice, very few cascades have depth
greater than 3.

\paragraph{Benchmarks}

We compare our \textsc{sparse mle} algorithm to 3 benchmarks: \textsc{greedy}
and \textsc{mle} from~\cite{Netrapalli:2012} and \textsc{lasso}. The
\textsc{mle} algorithm is a maximum-likelihood estimator without $\ell_1$-norm
penalization. \textsc{greedy} is an iterative algorithm. We introduced the
\textsc{lasso} algorithm in our experiments to achieve faster computation time:
$$\hat \theta_i \in \arg \min_{\theta} \sum_{t \in {\cal T}} |f(\theta_i\cdot
x^t) - x_i^{t+1}|^2 + \lambda \|\theta_i\|_1$$ \textsc{Lasso} has the merit of
being both easier and faster to optimize numerically than the other
convex-optimization based algorithms. It approximates the $\textsc{sparse mle}$
algorithm by making the assumption that the observations $x_i^{t+1}$ are of the
form: $x_i^{t+1} = f(\theta_i\cdot x^t) + \epsilon$, where $\epsilon$ is random
white noise. This is not valid in theory since $\epsilon$ \emph{depends on}
$f(\theta_i\cdot x^t)$, however the approximation is validated in practice.

We did not benchmark against other known algorithms (\textsc{netrate}
\cite{gomezbalduzzi:2011} and \textsc{first edge}~\cite{Abrahao:13}) due to the
discrete-time assumption. These algorithms also suppose a single-source model,
whereas \textsc{sparse mle}, \textsc{mle}, and \textsc{greedy} do not. Learning
the graph in the case of a multi-source cascade model is harder (see
Figure~\ref{fig:four_figs} (f)) but more realistic, since we rarely have access
to ``patient 0'' in practice.

\paragraph{Graph Estimation}
In the case of the \textsc{lasso}, \textsc{mle} and \textsc{sparse mle}
algorithms, we construct the edges of $\hat {\cal G} : \cup_{j \in V} \{(i,j) :
\Theta_{ij} > 0.1\}$, \emph{i.e} by thresholding. Finally, we report the
F1-score$=2
\text{precision}\cdot\text{recall}/(\text{precision}+\text{recall})$, which
considers \emph{(1)} the number of true edges recovered by the algorithm over
the total number of edges returned by the algorithm (\emph{precision}) and
\emph{(2)} the number of true edges recovered by the algorithm over the total
number of edges it should have recovered (\emph{recall}).  Over all experiments,
\textsc{sparse mle} achieves higher rates of precision, recall, and F1-score.
Interestingly, both \textsc{mle} and \textsc{sparse mle} perform exceptionally
well on the Watts-Strogatz graph.

\paragraph{Quantifying robustness}

The previous experiments only considered graphs with strong edges. To test the
algorithms in the approximately sparse case, we add sparse edges to the
previous graphs according to a bernoulli variable of parameter $1/3$ for every
non-edge, and drawing a weight uniformly from $[0,0.1]$. The non-sparse case is
compared to the sparse case in Figure~\ref{fig:four_figs} (d)--(e) for the
$\ell_2$ norm showing that both the \textsc{lasso}, followed by \textsc{sparse
mle} are the most robust to noise.

\section{Future Work}
\label{sec:linear_threshold}
Solving the Graph Inference problem with sparse recovery techniques opens new
venues for future work. Firstly, the sparse recovery literature has already
studied regularization patterns beyond the $\ell_1$-norm, notably the
thresholded and adaptive lasso~\cite{vandegeer:2011, Zou:2006}. Another goal
would be to obtain confidence intervals for our estimator, similarly to what
has been obtained for the Lasso in the recent series of papers
\cite{javanmard2014, zhang2014}.

Finally, the linear threshold model is a commonly studied diffusion process and
can also be cast as a \emph{generalized linear cascade} with inverse link
function $z \mapsto \mathbbm{1}_{z > 0}$:
$
    \label{eq:lt}
    X^{t+1}_j = \text{sign} \left(\inprod{\theta_j}{X^t} - t_j \right)
    $.
This model therefore falls into the 1-bit compressed sensing framework
\cite{Boufounos:2008}. Several recent papers study the theoretical
guarantees obtained for 1-bit compressed sensing with specific measurements
\cite{Gupta:2010, Plan:2014}. Whilst they obtained bounds of the order
${\cal O}(s \log \frac{m}{s}$), no current theory exists for recovering
positive bounded signals from binary measurememts. This research direction
may provide the first clues to solve the ``adaptive learning'' problem: if we
are allowed to adaptively \emph{choose} the source nodes at the beginning of
each cascade, how much can we improve the current results?

\newpage

\section*{Acknowledgments}

We would like to thank Yaron Singer, David Parkes, Jelani Nelson, Edoardo
Airoldi and Or Sheffet for helpful discussions. We are also grateful to the
anonymous reviewers for their insightful feedback and suggestions.
\bibliography{sparse}

\begin{thebibliography}{34}
\providecommand{\natexlab}[1]{#1}
\providecommand{\url}[1]{\texttt{#1}}
\expandafter\ifx\csname urlstyle\endcsname\relax
  \providecommand{\doi}[1]{doi: #1}\else
  \providecommand{\doi}{doi: \begingroup \urlstyle{rm}\Url}\fi

\bibitem[Abrahao et~al.(2013)Abrahao, Chierichetti, Kleinberg, and
  Panconesi]{Abrahao:13}
Abrahao, Bruno~D., Chierichetti, Flavio, Kleinberg, Robert, and Panconesi,
  Alessandro.
\newblock Trace complexity of network inference.
\newblock In \emph{The 19th {ACM} {SIGKDD} International Conference on
  Knowledge Discovery and Data Mining, {KDD} 2013, Chicago, IL, USA, August
  11-14, 2013}, pp.\  491--499, 2013.

\bibitem[Adar \& Adamic(2005)Adar and Adamic]{AdarA05}
Adar, Eytan and Adamic, Lada~A.
\newblock Tracking information epidemics in blogspace.
\newblock In \emph{2005 {IEEE} / {WIC} / {ACM} International Conference on Web
  Intelligence {(WI} 2005), 19-22 September 2005, Compiegne, France}, pp.\
  207--214, 2005.

\bibitem[Albert \& Barab{\'{a}}si(2001)Albert and
  Barab{\'{a}}si]{barabasi:2001}
Albert, R{\'{e}}ka and Barab{\'{a}}si, Albert{-}L{\'{a}}szl{\'{o}}.
\newblock Statistical mechanics of complex networks.
\newblock \emph{CoRR}, cond-mat/0106096, 2001.

\bibitem[Ba et~al.(2011)Ba, Indyk, Price, and Woodruff]{bipw11}
Ba, Khanh~Do, Indyk, Piotr, Price, Eric, and Woodruff, David~P.
\newblock Lower bounds for sparse recovery.
\newblock \emph{CoRR}, abs/1106.0365, 2011.

\bibitem[Bickel et~al.(2009{\natexlab{a}})Bickel, Ritov, and
  Tsybakov]{bickel2009simultaneous}
Bickel, Peter~J, Ritov, Ya'acov, and Tsybakov, Alexandre~B.
\newblock Simultaneous analysis of lasso and dantzig selector.
\newblock \emph{The Annals of Statistics}, pp.\  1705--1732,
  2009{\natexlab{a}}.

\bibitem[Bickel et~al.(2009{\natexlab{b}})Bickel, Ritov, and
  Tsybakov]{bickel:2009}
Bickel, Peter~J., Ritov, Ya’acov, and Tsybakov, Alexandre~B.
\newblock Simultaneous analysis of lasso and dantzig selector.
\newblock \emph{Ann. Statist.}, 37\penalty0 (4):\penalty0 1705--1732, 08
  2009{\natexlab{b}}.

\bibitem[Boufounos \& Baraniuk(2008)Boufounos and Baraniuk]{Boufounos:2008}
Boufounos, Petros and Baraniuk, Richard~G.
\newblock 1-bit compressive sensing.
\newblock In \emph{42nd Annual Conference on Information Sciences and Systems,
  {CISS} 2008, Princeton, NJ, USA, 19-21 March 2008}, pp.\  16--21, 2008.

\bibitem[Candes \& Tao(2006)Candes and Tao]{candes2006near}
Candes, Emmanuel~J and Tao, Terence.
\newblock Near-optimal signal recovery from random projections: Universal
  encoding strategies?
\newblock \emph{Information Theory, IEEE Transactions on}, 52\penalty0
  (12):\penalty0 5406--5425, 2006.

\bibitem[Daneshmand et~al.(2014)Daneshmand, Gomez{-}Rodriguez, Song, and
  Sch{\"{o}}lkopf]{Daneshmand:2014}
Daneshmand, Hadi, Gomez{-}Rodriguez, Manuel, Song, Le, and Sch{\"{o}}lkopf,
  Bernhard.
\newblock Estimating diffusion network structures: Recovery conditions, sample
  complexity {\&} soft-thresholding algorithm.
\newblock In \emph{Proceedings of the 31th International Conference on Machine
  Learning, {ICML} 2014, Beijing, China, 21-26 June 2014}, pp.\  793--801,
  2014.

\bibitem[Donoho(2006)]{donoho2006compressed}
Donoho, David~L.
\newblock Compressed sensing.
\newblock \emph{Information Theory, IEEE Transactions on}, 52\penalty0
  (4):\penalty0 1289--1306, 2006.

\bibitem[Du et~al.(2013)Du, Song, Woo, and Zha]{du2013uncover}
Du, Nan, Song, Le, Woo, Hyenkyun, and Zha, Hongyuan.
\newblock Uncover topic-sensitive information diffusion networks.
\newblock In \emph{Proceedings of the Sixteenth International Conference on
  Artificial Intelligence and Statistics}, pp.\  229--237, 2013.

\bibitem[Du et~al.(2014)Du, Liang, Balcan, and Song]{du2014influence}
Du, Nan, Liang, Yingyu, Balcan, Maria, and Song, Le.
\newblock Influence function learning in information diffusion networks.
\newblock In \emph{Proceedings of the 31st International Conference on Machine
  Learning (ICML-14)}, pp.\  2016--2024, 2014.

\bibitem[Gomez~Rodriguez et~al.(2010)Gomez~Rodriguez, Leskovec, and
  Krause]{GomezRodriguez:2010}
Gomez~Rodriguez, Manuel, Leskovec, Jure, and Krause, Andreas.
\newblock Inferring networks of diffusion and influence.
\newblock In \emph{Proceedings of the 16th ACM SIGKDD International Conference
  on Knowledge Discovery and Data Mining}, KDD '10, pp.\  1019--1028, New York,
  NY, USA, 2010. ACM.
\newblock ISBN 978-1-4503-0055-1.

\bibitem[Gomez{-}Rodriguez et~al.(2011)Gomez{-}Rodriguez, Balduzzi, and
  Sch{\"{o}}lkopf]{gomezbalduzzi:2011}
Gomez{-}Rodriguez, Manuel, Balduzzi, David, and Sch{\"{o}}lkopf, Bernhard.
\newblock Uncovering the temporal dynamics of diffusion networks.
\newblock \emph{CoRR}, abs/1105.0697, 2011.

\bibitem[Gupta et~al.(2010)Gupta, Nowak, and Recht]{Gupta:2010}
Gupta, Ankit, Nowak, Robert, and Recht, Benjamin.
\newblock Sample complexity for 1-bit compressed sensing and sparse
  classification.
\newblock In \emph{{IEEE} International Symposium on Information Theory, {ISIT}
  2010, June 13-18, 2010, Austin, Texas, USA, Proceedings}, pp.\  1553--1557,
  2010.

\bibitem[Holme \& Kim(2002)Holme and Kim]{Holme:2002}
Holme, Petter and Kim, Beom~Jun.
\newblock Growing scale-free networks with tunable clustering.
\newblock \emph{Physical review E}, 65:\penalty0 026--107, 2002.

\bibitem[Javanmard \& Montanari(2014)Javanmard and Montanari]{javanmard2014}
Javanmard, Adel and Montanari, Andrea.
\newblock Confidence intervals and hypothesis testing for high-dimensional
  regression.
\newblock \emph{The Journal of Machine Learning Research}, 15\penalty0
  (1):\penalty0 2869--2909, 2014.

\bibitem[Kempe et~al.(2003)Kempe, Kleinberg, and Tardos]{Kempe:03}
Kempe, David, Kleinberg, Jon~M., and Tardos, {\'{E}}va.
\newblock Maximizing the spread of influence through a social network.
\newblock In \emph{Proceedings of the Ninth {ACM} {SIGKDD} International
  Conference on Knowledge Discovery and Data Mining, Washington, DC, USA,
  August 24 - 27, 2003}, pp.\  137--146, 2003.

\bibitem[Leskovec et~al.(2007)Leskovec, McGlohon, Faloutsos, Glance, and
  Hurst]{Leskovec07}
Leskovec, Jure, McGlohon, Mary, Faloutsos, Christos, Glance, Natalie~S., and
  Hurst, Matthew.
\newblock Patterns of cascading behavior in large blog graphs.
\newblock In \emph{Proceedings of the Seventh {SIAM} International Conference
  on Data Mining, April 26-28, 2007, Minneapolis, Minnesota, {USA}}, pp.\
  551--556, 2007.

\bibitem[Leskovec et~al.(2010)Leskovec, Chakrabarti, Kleinberg, Faloutsos, and
  Ghahramani]{Leskovec:2010}
Leskovec, Jure, Chakrabarti, Deepayan, Kleinberg, Jon~M., Faloutsos, Christos,
  and Ghahramani, Zoubin.
\newblock Kronecker graphs: An approach to modeling networks.
\newblock \emph{Journal of Machine Learning Research}, 11:\penalty0 985--1042,
  2010.

\bibitem[Liben-Nowell \& Kleinberg(2008)Liben-Nowell and Kleinberg]{Nowell08}
Liben-Nowell, David and Kleinberg, Jon.
\newblock {Tracing information flow on a global scale using Internet
  chain-letter data}.
\newblock \emph{Proceedings of the National Academy of Sciences}, 105\penalty0
  (12):\penalty0 4633--4638, 2008.

\bibitem[Negahban et~al.(2012)Negahban, Ravikumar, Wrainwright, and
  Yu]{Negahban:2009}
Negahban, Sahand~N., Ravikumar, Pradeep, Wrainwright, Martin~J., and Yu, Bin.
\newblock A unified framework for high-dimensional analysis of m-estimators
  with decomposable regularizers.
\newblock \emph{Statistical Science}, 27\penalty0 (4):\penalty0 538--557,
  December 2012.

\bibitem[Netrapalli \& Sanghavi(2012)Netrapalli and Sanghavi]{Netrapalli:2012}
Netrapalli, Praneeth and Sanghavi, Sujay.
\newblock Learning the graph of epidemic cascades.
\newblock \emph{SIGMETRICS Perform. Eval. Rev.}, 40\penalty0 (1), June 2012.
\newblock ISSN 0163-5999.

\bibitem[Plan \& Vershynin(2014)Plan and Vershynin]{Plan:2014}
Plan, Yaniv and Vershynin, Roman.
\newblock Dimension reduction by random hyperplane tessellations.
\newblock \emph{Discrete {\&} Computational Geometry}, 51\penalty0
  (2):\penalty0 438--461, 2014.

\bibitem[Price \& Woodruff(2011)Price and Woodruff]{pw11}
Price, Eric and Woodruff, David~P.
\newblock {(1} + eps)-approximate sparse recovery.
\newblock In Ostrovsky, Rafail (ed.), \emph{{IEEE} 52nd Annual Symposium on
  Foundations of Computer Science, {FOCS} 2011, Palm Springs, CA, USA, October
  22-25, 2011}, pp.\  295--304. {IEEE} Computer Society, 2011.
\newblock ISBN 978-1-4577-1843-4.

\bibitem[Price \& Woodruff(2012)Price and Woodruff]{pw12}
Price, Eric and Woodruff, David~P.
\newblock Applications of the shannon-hartley theorem to data streams and
  sparse recovery.
\newblock In \emph{Proceedings of the 2012 {IEEE} International Symposium on
  Information Theory, {ISIT} 2012, Cambridge, MA, USA, July 1-6, 2012}, pp.\
  2446--2450. {IEEE}, 2012.
\newblock ISBN 978-1-4673-2580-6.

\bibitem[Raskutti et~al.(2010)Raskutti, Wainwright, and Yu]{raskutti:10}
Raskutti, Garvesh, Wainwright, Martin~J., and Yu, Bin.
\newblock Restricted eigenvalue properties for correlated gaussian designs.
\newblock \emph{Journal of Machine Learning Research}, 11:\penalty0 2241--2259,
  2010.

\bibitem[Rudelson \& Zhou(2013)Rudelson and Zhou]{rudelson:13}
Rudelson, Mark and Zhou, Shuheng.
\newblock Reconstruction from anisotropic random measurements.
\newblock \emph{{IEEE} Transactions on Information Theory}, 59\penalty0
  (6):\penalty0 3434--3447, 2013.

\bibitem[van~de Geer et~al.(2011)van~de Geer, Bühlmann, and
  Zhou]{vandegeer:2011}
van~de Geer, Sara, Bühlmann, Peter, and Zhou, Shuheng.
\newblock The adaptive and the thresholded lasso for potentially misspecified
  models (and a lower bound for the lasso).
\newblock \emph{Electron. J. Statist.}, 5:\penalty0 688--749, 2011.

\bibitem[van~de Geer \& B{\"u}hlmann(2009)van~de Geer and
  B{\"u}hlmann]{vandegeer:2009}
van~de Geer, Sara~A. and B{\"u}hlmann, Peter.
\newblock On the conditions used to prove oracle results for the lasso.
\newblock \emph{Electron. J. Statist.}, 3:\penalty0 1360--1392, 2009.

\bibitem[Watts \& Strogatz(1998)Watts and Strogatz]{watts:1998}
Watts, Duncan~J. and Strogatz, Steven~H.
\newblock Collective dynamics of `small-world' networks.
\newblock \emph{Nature}, 393\penalty0 (6684):\penalty0 440--442, 1998.

\bibitem[Zhang \& Zhang(2014)Zhang and Zhang]{zhang2014}
Zhang, Cun-Hui and Zhang, Stephanie~S.
\newblock Confidence intervals for low dimensional parameters in high
  dimensional linear models.
\newblock \emph{Journal of the Royal Statistical Society: Series B (Statistical
  Methodology)}, 76\penalty0 (1):\penalty0 217--242, 2014.

\bibitem[Zhao \& Yu(2006)Zhao and Yu]{Zhao:2006}
Zhao, Peng and Yu, Bin.
\newblock On model selection consistency of lasso.
\newblock \emph{J. Mach. Learn. Res.}, 7:\penalty0 2541--2563, December 2006.
\newblock ISSN 1532-4435.

\bibitem[Zou(2006)]{Zou:2006}
Zou, Hui.
\newblock The adaptive lasso and its oracle properties.
\newblock \emph{Journal of the American Statistical Association}, 101\penalty0
  (476):\penalty0 1418--1429, 2006.

\end{thebibliography}
\bibliographystyle{icml2015}

\newpage
\section{Appendix}
In this appendix, we provide the missing proofs of Section~\ref{sec:results}
and Section~\ref{sec:lowerbound}. We also show additional experiments on the
running time of our recovery algorithm which could not fit in the main part of
the paper.

\subsection{Proofs of Section~\ref{sec:results}}

\begin{proof}[Proof of Lemma~\ref{lem:transform}]
Using the inequality $\forall x>0, \; \log x \geq 1 - \frac{1}{x}$, we have
$|\log (\frac{1}{1 - p}) - \log (\frac{1}{1-p'})| \geq \max(1 - \frac{1-p}{1-p'},
1 - \frac{1-p'}{1-p}) \geq \max( p-p', p'-p)$.
\end{proof}

\begin{proof}[Proof of Lemma~\ref{lem:ub}]
The gradient of $\mathcal{L}$ is given by:
\begin{multline*}
    \nabla \mathcal{L}(\theta^*) =
    \frac{1}{|\mathcal{T}|}\sum_{t\in \mathcal{T}}x^t\bigg[
        x_i^{t+1}\frac{f'}{f}(\inprod{\theta^*}{x^t})\\
    - (1-x_i^{t+1})\frac{f'}{1-f}(\inprod{\theta^*}{x^t})\bigg]
\end{multline*}

Let $\partial_j \mathcal{L}(\theta)$ be the $j$-th coordinate of
$\nabla\mathcal{L}(\theta^*)$.  Writing
$\partial_j\mathcal{L}(\theta^*)
= \frac{1}{|\mathcal{T}|}\sum_{t\in\mathcal{T}} Y_t$ and since
$\E[x_i^{t+1}|x^t]= f(\inprod{\theta^*}{x^t})$, we have that $\E[Y_{t+1}|Y_t]
= 0$. Hence $Z_t = \sum_{k=1}^t Y_k$ is a martingale.

Using assumption (LF), we have almost surely $|Z_{t+1}-Z_t|\leq
\frac{1}{\alpha}$ and we can apply Azuma's inequality to $Z_t$:
\begin{displaymath}
    \P\big[|Z_{\mathcal{T}}|\geq \lambda\big]\leq
    2\exp\left(\frac{-\lambda^2\alpha}{2n}\right)
\end{displaymath}

Applying a union bound to have the previous inequality hold for all coordinates
of $\nabla\mathcal{L}(\theta)$ implies:
\begin{align*}
    \P\big[\|\nabla\mathcal{L}(\theta^*)\|_{\infty}\geq \lambda \big]
    &\leq 2m\exp\left(\frac{-\lambda^2n\alpha}{2}\right)
\end{align*}
Choosing $\lambda\defeq 2\sqrt{\frac{\log m}{\alpha n^{1-\delta}}}$ concludes
the proof.
\end{proof}

\begin{proof}[Proof of Corollary~\ref{cor:variable_selection}]
By choosing $\delta = 0$, if $ n > \frac{9s\log m}{\alpha\gamma^2\epsilon^2}$,
then $\|\hat \theta-\theta^*\|_2 < \epsilon < \eta$ with probability
$1-\frac{1}{m}$. If $\theta_i^* = 0$ and $\hat \theta > \eta$, then $\|\hat
\theta - \theta^*\|_2 \geq |\hat \theta_i-\theta_i^*| > \eta$, which is a
contradiction. Therefore we get no false positives. If $\theta_i^* > \eta +
\epsilon$, then $|\hat{\theta}_i- \theta_i^*| < \epsilon \implies \theta_j >
\eta$ and we get all strong parents.
\end{proof}

\paragraph{(RE) with high probability} We now prove Proposition~\ref{prop:fi}.
The proof mostly relies on showing that the Hessian of likelihood function
$\mathcal{L}$ is sufficiently well concentrated around its expectation.

\begin{proof}Writing $H\defeq \nabla^2\mathcal{L}(\theta^*)$, if
    $ \forall\Delta\in C(S),\;
        \|\E[H] - H]\|_\infty\leq \lambda $
    and $\E[H]$ verifies the $(S,\gamma)$-(RE)
    condition then:
    \begin{equation}
        \label{eq:foo}
    \forall \Delta\in C(S),\;
    \Delta H\Delta \geq
    \Delta \E[H]\Delta(1-32s\lambda/\gamma)
    \end{equation}
    Indeed, $
    |\Delta(H-E[H])\Delta| \leq 2\lambda \|\Delta\|_1^2\leq
    2\lambda(4\sqrt{s}\|\Delta_s\|_2)^2
    $.
    Writing
    $\partial^2_{i,j}\mathcal{L}(\theta^*)=\frac{1}{|\mathcal{T}|}\sum_{t\in
    T}Y_t$ and using $(LF)$ and $(LF2)$ we have $\big|Y_t - \E[Y_t]\big|\leq
    \frac{3}{\alpha}$.
    Applying Azuma's inequality as in the proof of Lemma~\ref{lem:ub}, this
    implies:
    \begin{displaymath}
        \P\big[\|\E[H]-H\|_{\infty}\geq\lambda\big] \leq
        2\exp\left(-\frac{n\alpha\lambda^2}{3} + 2\log m\right)
    \end{displaymath}
    Thus, if we take $\lambda=\sqrt{\frac{9log m}{\alpha
    n^{1-\delta}}}$, $\|E[H]-H\|_{\infty}\leq\lambda$ w.p at least
    $1-e^{-n^{\delta}\log m}$. When $n^{1-\delta}\geq
    \frac{1}{28\gamma\alpha}s^2\log m$, \eqref{eq:foo} implies
    $
    \forall \Delta\in C(S),\;
    \Delta H\Delta \geq \frac{1}{2} \Delta \E[H]\Delta,
    $ w.p.~at least $1-e^{-n^{\delta}\log m}$ and the conclusion of
    Proposition~\ref{prop:fi} follows.
\end{proof}

\subsection{Proof of Theorem~\ref{thm:lb}}

Let us consider an algorithm $\mathcal{A}$ which verifies the recovery
guarantee of Theorem~\ref{thm:lb}: there exists a probability distribution over
measurements such that for all vectors $\theta^*$, \eqref{eq:lb} holds w.p.
$\delta$. This implies by the probabilistic method that for all distribution
$D$ over vectors $\theta$, there exists an $n\times m$ measurement matrix $X_D$
with such that \eqref{eq:lb} holds w.p. $\delta$ ($\theta$ is now the random
variable).

Consider the following distribution $D$: choose $S$ uniformly at random from a
``well-chosen'' set of $s$-sparse supports $\mathcal{F}$ and $t$ uniformly at
random from
$X \defeq\big\{t\in\{-1,0,1\}^m\,|\, \mathrm{supp}(t)\in\mathcal{F}\big\}$.
Define $\theta = t + w$ where
$w\sim\mathcal{N}(0, \alpha\frac{s}{m}I_m)$ and $\alpha = \Omega(\frac{1}{C})$.

Consider the following communication game between Alice and Bob: \emph{(1)}
Alice sends $y\in\reals^m$ drawn from a Bernouilli distribution of parameter
$f(X_D\theta)$ to Bob. \emph{(2)} Bob uses $\mathcal{A}$ to recover
$\hat{\theta}$ from $y$.  It can be shown that at the end of the game Bob now
has a quantity of information $\Omega(s\log \frac{m}{s})$ about $S$. By the
Shannon-Hartley theorem, this information is also upper-bounded by $\O(n\log
C)$. These two bounds together imply the theorem.

\subsection{Running Time Analysis}

\begin{figure}
  \centering
  \includegraphics[scale=.4]{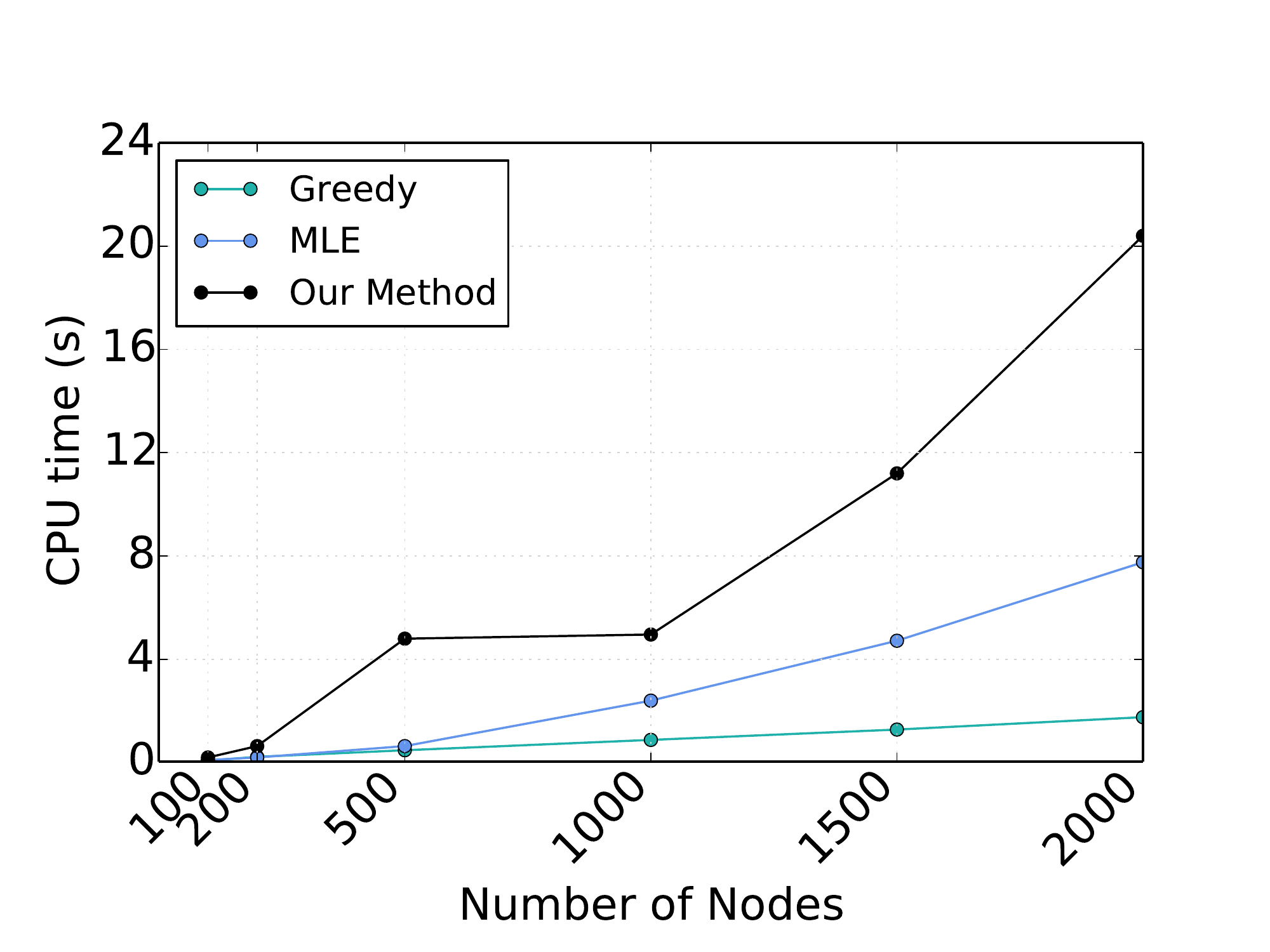}
  \vspace{-1em}
  \caption{Running time analysis for estimating the parents of a \emph{single
    node} on a Barabasi-Albert graph as a function of the number of nodes in
    the graph. The parameter $k$ (number of nodes each new node is attached to)
    was set to $30$.  $p_{\text{init}}$ is chosen equal to $.15$, and
    the edge weights are chosen uniformly at random in $[.2,.7]$. The
  penalization parameter $\lambda$ is chosen equal to $.1$.}
    \label{fig:running_time_n_nodes}
    \vspace{-1em}
\end{figure}

\begin{figure}
  \centering
  \includegraphics[scale=.4]{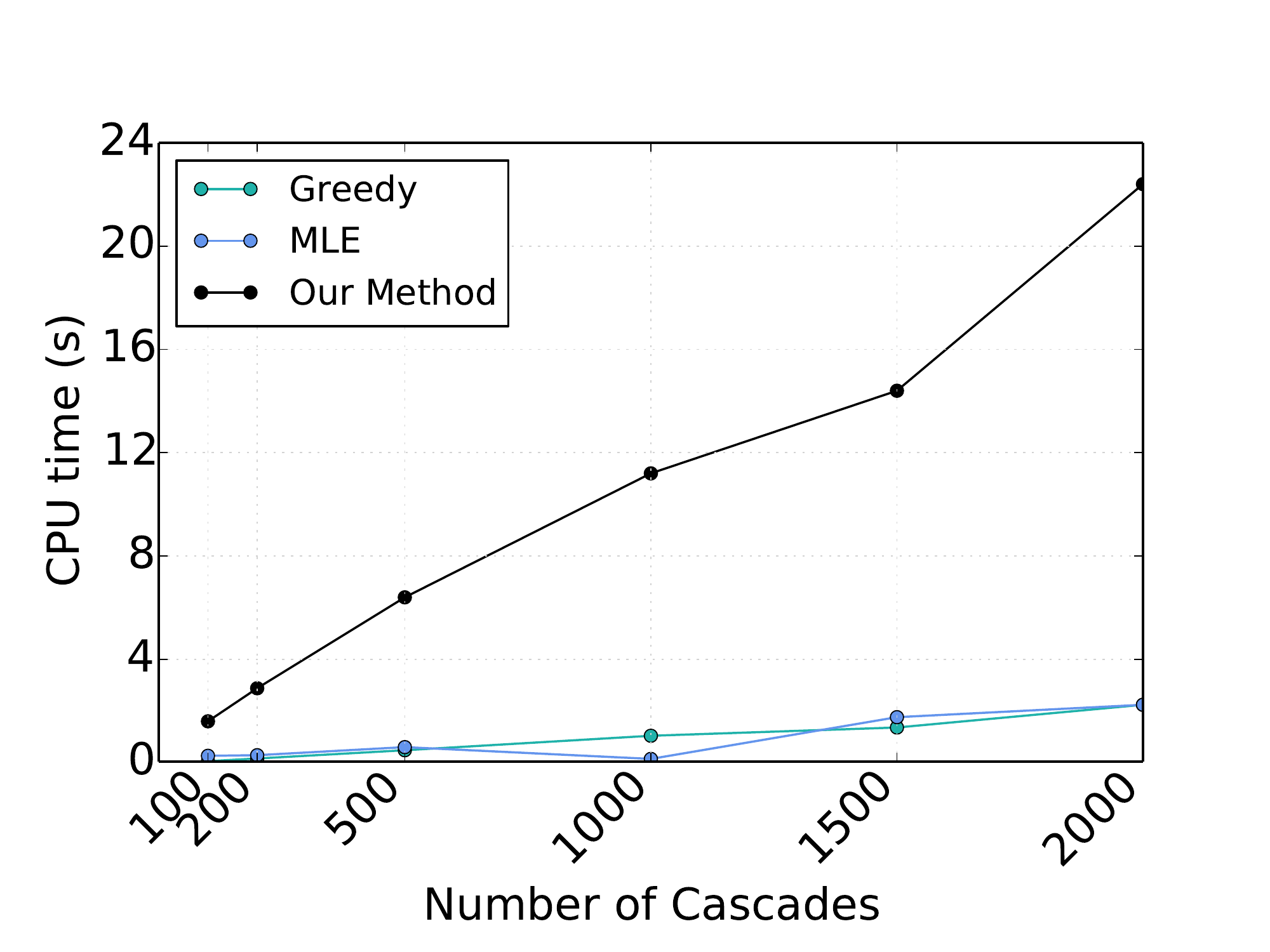}
  \caption{Running time analysis for estimating the parents of a \emph{single
    node} on a Barabasi-Albert graph as a function of the number of total
    observed cascades. The parameters defining the graph were set as in
Figure~\ref{fig:running_time_n_nodes}.}
    \label{fig:running_time_n_cascades}
    \vspace{-1em}
\end{figure}

We include here a running time analysis of our algorithm. In
Figure~\ref{fig:running_time_n_nodes}, we compared our algorithm to the
benchmark algorithms for increasing values of the number of nodes. In
Figure~\ref{fig:running_time_n_cascades}, we compared our algorithm to the
benchmarks for a fixed graph but for increasing number of observed cascades.

In both Figures, unsurprisingly, the simple greedy algorithm is the fastest.
Even though both the MLE algorithm and the algorithm we introduced are based on
convex optimization, the MLE algorithm is faster. This is due to the overhead
caused by the $\ell_1$-regularisation in~\eqref{eq:pre-mle}.

The dependency of the running time on the number of cascades increases is
linear, as expected. The slope is largest for our algorithm, which is again
caused by the overhead induced by the $\ell_1$-regularization.









\end{document}